\numberwithin{equation}{section}
\theoremstyle{plain}
\theoremstyle{plain}
\newtheorem{theorem}{Theorem}
\newtheorem{corollary}{Corollary}
\newtheorem{lemma}{Lemma}
\newtheorem{pro}{Proposition}
\newtheorem{asu}{Assumption}
\newcounter{subassumption}[asu]
\renewcommand{\p@subassumption}{\theasu}
\newcounter{subproposition}[pro]
\renewcommand{\p@subproposition}{\thepro}
\theoremstyle{definition}
\newtheorem{definition}{Definition}
\newtheorem{remark}{Remark}
\newtheorem{model}{Model}
\newtheorem*{psm}{PSM technique}
\pgfplotsset{compat=newest}
\begin{document}
\pdfcatalog{/PageLayout /SinglePage}
\pgfplotsset{
standard/.style={
axis line style=thick,
trig format=rad,
enlargelimits,
axis x line=middle,
axis y line=middle,
enlarge x limits=0.15,
enlarge y limits=0.15,
every axis x label/.style={at={(current axis.right of origin)} ,anchor=north west},
every axis y label/.style={at={(current axis.above origin)} ,anchor=south east}
}
}
\begin{frontmatter}
\title{An Econometric Analysis of the Impact of Telecare on the Length of Stay in Hospital\protect\thanksref{T1}}
\runtitle{Demand for Telecare and Hospital Length of Stay}
\thankstext{T1}{This paper builds on one of the chapters of the author's Ph.D. that investigated the effectiveness of telecare using economic analyses.  A previous version of the paper was presented in $2017$ at the University of Aberdeen as part of the 24-month assessment for the Ph.D.  The empirical study in this paper used administrative data for patients in Scotland.  All the patient names were replaced with unique reference numbers to ensure that the data was anonymized.}

\begin{aug}
\author{\fnms{Kevin} \snm{Momanyi}\thanksref{a,t1}\ead[label=e1]{kevzy44@gmail.com}}

\thankstext{t1}{Corresponding author, email address:\texttt{ kevzy44@gmail.com.}}
\runauthor{K. Momanyi, 2025}

\address[a]{
Health Economics Research Unit, University of Aberdeen
, Polwarth Building, Foresterhill Campus,
Aberdeen AB25 2ZD.
}

\end{aug}

\begin{abstract}
In this paper, we develop a theoretical model that links the demand for telecare to the length of stay in hospital and formulate three models that can be used to derive the treatment effect by making various assumptions about the probability distribution of the outcome measure.  We then fit the models to data and estimate them using a strategy that controls for the effects of confounding variables and unobservable factors, and compare the treatment effects with that of the Propensity Score Matching (PSM) technique which adopts a quasi-experimental study design.  To ensure comparability, the covariates are kept identical in all cases.  An important finding that emerges from our analysis is that the treatment effects derived from our econometric models of interest are better than that obtained from an experimental study design as the latter does not account for all the relevant unobservable factors.  In particular, the results show that estimating the treatment effect of telecare in the way that an experimental study design entails fails to account for the systematic variations in individuals' health production functions within each experimental arm.
\end{abstract}

\begin{keyword}
\kwd{Demand for telecare}
\kwd{econometric models}
\kwd{health production}
\end{keyword}

\end{frontmatter}

\section{Introduction}\label{Section 1}

The use of devices to monitor individuals' health and safety at home--commonly referred to as telecare--may complement or substitute for social care and unpaid care.  Telecare covers a wide range of devices from the basic community alarm, which allows the users to call for help by simply pressing a button, to more sophisticated devices that allow for remote exchange of clinical data between the users and their care providers and virtual consulting using audio and video technology (\cite{r13,r14}).  The use of telecare is beneficial to both the users and their carers.  For instance, telecare reduces the need for residential care mainly through delayed admissions and also offers increased choice and independence for the users.  Telecare can also reduce pressures on carers by freeing up some of their time thus giving them more personal freedom (\cite{r17,r23}).

Although the health economics literature abounds with several outcome measures ranging from functional performance measures such as lower body strength and level of assistance in performing activities of daily living (see, for example, \cite{r34,r46,r35}) to measures of health and well-being such as Quality Adjusted Life Year (QALY) and social care related quality of life (see, for example, \cite{r27,r54,r31}), we consider the length of stay in hospital as our outcome measure of interest.  This measure is ideal as it is an indicator of both individuals' general health and the strength of the health system (see, for example, \cite{r5,r53,r22}).  The length of stay in hospital is also of great interest to health policy makers who aim at designing policies that would reduce the prevalence of health care associated illnesses and general inpatient costs as well as supporting self-management of chronic illnesses (see, for example, \cite{r19,r8}).

In this paper, we present a novel framework for estimating the effect of the demand for telecare on the length of stay in hospital.  We argue that the use of telecare could substitute for some health care services that would have otherwise been provided in hospital, thereby resulting in a shorter length of stay in hospital.  It could be the case, therefore, that telecare users have a comparatively short length of stay in hospital because in the event of hospitalization, their health care providers opt to discharge them as soon as they are medically fit to be discharged given that most of them can be monitored remotely.  It could also be the case that the timely response by health care providers due to telecare use enables telecare users to avert serious health complications that would have seen them stay longer in hospital.  Because individuals have different biological endowments, which are typically unobservable, their health states and expected length of stay in hospital could vary even when using the same telecare devices.

We also argue that the consumption bundles of telecare users comprise several affordable health enhancing inputs in addition to telecare but telecare users choose telecare to improve their health status.  Accordingly, that telecare improves individuals' welfare by shortening their length of stay in hospital and also that it improves health implies that telecare use is an input in both the individuals' utility functions and health production functions.  This conceptualization makes our framework to have a considerable advantage over the experimental studies in the literature as it enables us to estimate the treatment effect whilst allowing individuals' health production to vary.  Furthermore, it is also appealing to economists and health planners who need to design policies based on measures of effectiveness that are obtained from real-world settings. 

Economists interested in conducting impact evaluations usually use economic theory to justify the causal inferences that they make and our particular framework is no different.  In addition to thinking of telecare use as a demand relation, in which case telecare users choose whether or not to use telecare devices based on prevailing market conditions, we also broadly classify the predictors of the length of stay in hospital as `predisposing', `enabling' and `need' factors following the Andersen's Behavioral Model of Health Services Use (see \cite{r6,r7} for a comprehensive discussion of these factors).  The predisposing factors are factors such as age, sex and area of residence that could make individuals to be more likely to have a longer length of stay in hospital (see, for example, \cite{r1} and \cite{r15} who find that older individuals have a longer length of stay in hospital than their younger counterparts on average; \cite{r26} who notes that rural residents are more likely to have a longer length of stay in hospital than the individuals who reside in urban areas, other factors held constant, and \cite{r33,r52} and \cite{r36} who note that males have a longer length of stay in hospital than females, holding other factors constant).  Enabling factors are factors such as individuals' income levels and access to health care services that enable or impede service use and could thus in turn have an impact on the length of stay in hospital.  Since we argue that telecare users may have a comparatively short length of stay in hospital by using telecare devices to substitute for some of the services provided by their care providers, it follows then that the use of telecare is an enabling factor in our case.  The need factors are factors such as multimorbidity and polypharmacy that are indicative of individuals' care needs.  We therefore expect that, all things equal, a high level of need is associated with a relatively longer length of stay in hospital (see, for example, \cite{r1}).

To set the scene for the analytical problem in this paper, consider a hypothetical population comprising several individuals who use health care services.  These individuals have a greater predisposition to use health care services to improve their health and well-being than their peers and can afford to purchase whatever services they want, but a number of them choose to use telecare instead of the other services for some reason.  We recognize that individuals have unique health endowments that directly affect their observable health status whether or not they use health care services and because of this heterogeneity, their likelihood of being hospitalized due to health deterioration may differ even when they all decide to use the same form of care.  We may also observe systematic variations in their length of stay in hospital  due to this heterogeneity and not necessarily because of their differences in use of services.  This is illustrated in Figure~\ref{Figure 1}.
\tikzstyle{decision}=[diamond,draw,fill=blue!50]
\tikzstyle{line}=[draw,-stealth,thick]
\tikzstyle{elli}=[draw,ellipse,fill=red!50]
\tikzstyle{block}=[draw,rectangle,rounded corners, text width=15em,text centered,minimum height=15mm,node distance=5em]
\begin{figure}[t!]
\centering
\begin{tikzpicture}
\node[block,text width=7em,yshift=-5em](Health and Social Care System){Health and Social Care System};
\node[block,text width=7em,left of=Health and Social Care System, xshift=-10em](Endowment){Endowment};
\node[block,text width=7em,below of=Health and Social Care System,yshift=-10em](Length of stay in hospital){Length of stay in hospital};
\node[block,text width=7em,below of=Length of stay in hospital,yshift=-2em](Health outcomes){Health outcomes};
\node[block,text width=7em,below of=Endowment,yshift=-10em](Preferences){Preferences};
\draw[dotted,thick] (-2.5,1) -- (-2.5,-9.5);
\draw[thick] (-5.0,-0.75) -- (-5.0,0);
\draw[thick] (-5.0,0) -- (5.0,0);
\draw[line, thick] (5.0,0) -- (5.0,-1.59);
\draw[thick] (-3.33,-6) -- (-2,-6);
\draw[thick] (-2,-6) -- (-2,-0.5);
\draw[thick] (-2,-0.5) -- (4.0,-0.5);
\draw[line, thick] (4.0,-0.5) -- (4.0,-1.59);
\draw[line, thick] (-3.33,-6) -- (-1.16,-6);
\draw[line, thick] (0,-2.25) -- (0,-5.22);
\draw[line, thick] (0,-6.73) -- (0,-7.30);
\draw[thick] (1.15,-8.5) -- (5.0,-8.5);
\draw[line, thick] (5.0,-8.5) -- (5.0,-5.57);
\draw[thick] (4.0,-5.57) -- (4.0,-7.5);
\draw[line, thick] (4.0,-7.5) -- (1.15,-7.5);
\draw[line, thick] (4.0,-6.0) -- (1.16,-6.0);
\draw[thick] (0.5,-5.22) -- (0.5,-3.5);
\draw[line, thick] (0.5,-3.5) -- (2.75,-3.5);
\draw[thick] (-5.65,-1.5) -- (-6.5,-1.5);
\draw[thick] (-6.5,-1.5) -- (-6.5,-8.0);
\draw[line, thick] (-6.5,-8.0) -- (-1.16,-8.0);
\node[text width=3cm, anchor=west, right] at (-5.75,0.5)
    {Unobserved factors};
\node[text width=3cm, anchor=east, left] at (6.5,0.5)
    {Observed factors};
\node[block,text width=9em,text centered,minimum height=40mm,right of=Health and Social Care System,xshift=9em,yshift=-7em](Observable factors){\begin{enumerate}
 \item Predisposing factors
    \item Enabling factors
    \item Need factors
\end{enumerate}};
\tikzstyle{line}=[draw,-stealth,thick]
\end{tikzpicture}
\caption{An illustration of individuals' utility maximizing behavior} \label{Figure 1}
\end{figure}
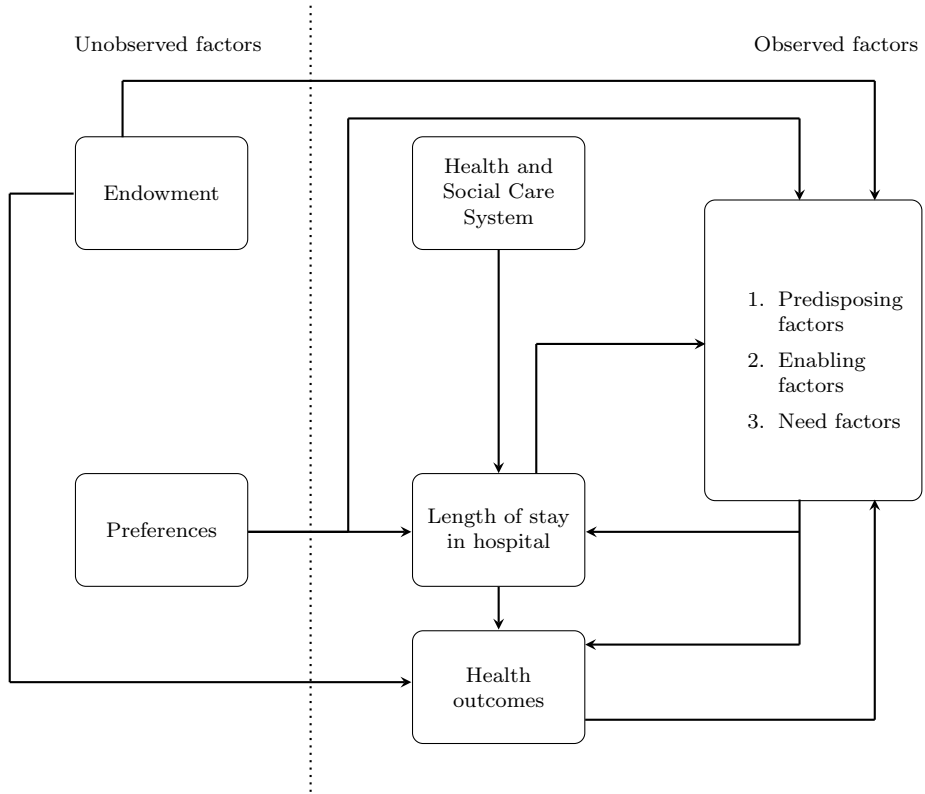

According to the figure, the length of stay in hospital is related to `predisposing', `enabling' and `need' factors.  The conceptual framework also shows that the observed variation in the length of stay in hospital may be brought about by factors operating at the health and social care system.  We can further observe from the figure that a particular individual's length of stay in hospital is influenced by the individual's level of endowment and preferences regarding the use of health care services, albeit indirectly.  The model developed in this paper causally links the decisions made by individuals to use telecare to their expected length of stay in hospital whilst disentangling the effects of unobservable confounding.

The majority of the previous studies have examined the effectiveness of telecare using experimental study designs.  A systematic review of the literature (\cite{r42}) showed that there are some studies that use randomized controlled trials (see, for example, \cite{r54,r29,r27}) and others that use quasi-experimental study designs such as the PSM technique and the Alternating Treatment Design (ATD) (see, for example, \cite{r18,r2,r4}).  Of these studies, it is only the studies by Akematsu and Tsuji and \cite{r54} that investigate the effect of telecare on the length of stay in hospital.  The studies by Akematsu and Tsuji find that telecare use leads to a decrease in the number of treatment days but their analytic sample is a potentially highly selected group of individuals, whereas the study by \cite{r54} does not find telecare use to be a significant predictor of the length of stay in hospital.  In order to encapsulate the mechanics of our proposed framework and, therefore, extend the literature, we use a unique dataset that contains non-experimental data from four different information sources.  

In particular, our unique dataset is a merger of the Scottish Morbidity Records (SMRs), prescribing data (which contains information on prescribed medications in Scotland), Self-Directed Support (SDS) data (which contains information on the decisions made by the individuals using social care services in Scotland regarding the provision of the services that they receive as well as data on several demographic characteristics) and the Homecare Census data for five local council areas in Scotland during the $2010/2011$ financial year.  The SMRs contain episode level data on acute hospital admissions (SMR01) and psychiatric care admissions (SMR04).  We construct our outcome measure of interest as the number of days that a particular individual spends in hospital while receiving treatment.  We also take advantage of the episode level data structure to generate time series data with repeated cross-sections over the $52$ weeks of the $2010/2011$ financial year.  We construct variables for age, sex, client group (which is a categorical variable with the following categories: `Dementia and Mental Health', `Learning disability', `Frail elderly' and `Physical disability'), telecare use, area of residence and comorbidity status; which serve as the covariates for the empirical analysis.  The variable for telecare use comprises the use of devices such as linked pill dispensers, linked smoke detectors, bogus caller buttons, property exit sensors and automated motion sensors, among others.

Our objective in this paper, therefore, is to formulate three econometric models that can be used to derive the causal impact of telecare on the length of stay in hospital in a real-world setting and compare the results with that of an experimental study design.  We specifically use the PSM technique for this demonstration due to its popularity in estimating causal effects using observational data.  The rest of the paper is as follows: Section~\ref{Section 2} introduces a theory of the demand for telecare; Section~\ref{Section 3} discusses the formulated econometric models; Section~\ref{Section 4} applies the formulated models to real-world data, and Section~\ref{Section 5} concludes.  An Appendix \ref{Appendix A} provides the proofs of our most important results.

\section{A theory of the demand for telecare}\label{Section 2}

\subsection{Notation and definitions}\label{subsection 2.1}

In order to operationalize the theoretical model for this study, we consider a constrained utility maximizing problem where individuals maximize utility and, by extension, their health status, $\mathcal{H}$, by using several utility generating inputs.  Let $\mathcal{D}\subset\mathbb{R}^n$ denote a set of utility generating inputs that an individual can consume at a particular point in time with notation $n$ in $\mathbb{R}^n$ indicating that individuals consume a finite amount of goods and services.  Let also $d\in \mathcal{D}$ represent individuals' non-negative consumption bundles such that $d=\mathbb{R}_+^n$ or more formally, $d\in\mathbb{R}^n:d\geq0$ for all $n=1,\ldots, N$.  Recall that since individuals simultaneously maximize utility and improve their health status, there exists $\mathcal{S}\in d$ such that $\mathcal{S}=\{\mathcal{X,K}\}$ and $\mathcal{S}\subseteq d$.  Here, $\mathcal{X}\in \mathcal{S}$ denotes the set of inputs that are directly associated with utility gains and $\mathcal{K}\in \mathcal{S}$ the set of health-related inputs that yield utility by improving $\mathcal{H}$.

If we let $\mathcal{B}=\{\mathcal{P,N}\}$, where $\mathcal{P}$ is a set of predisposing factors and $\mathcal{N}$ a set of need factors, so that we now have $\mathcal{X} \in \mathbb{R}^n$ such that $\mathcal{X}\geq0 \forall n=1,\ldots,N, \mathcal{K}\in \mathbb{R}_+^n$ and $\mathcal{B},d,\mathcal{S}\in \mathbb{R}^n$ such that $\mathcal{B},d,\mathcal{S}\geq0$ for all $n=1,\ldots,N$, then $\exists \mathcal{P}\in \mathcal{X}$ and also $\exists \mathcal{N}\in \mathcal{X}$ such that $\mathcal{P}=\{\mathcal{P}_1 + \mathcal{P}_2 +\ldots+ \mathcal{P}_n\}\subsetneq\mathcal{X}$, $\mathcal{N}=\{\mathcal{N}_1 + \mathcal{N}_2 +\ldots+ \mathcal{N}_n\}\subsetneq\mathcal{X}$ and $\mathcal{B}\subseteq\mathcal{X}$.  Because telecare use, $\mathcal{T}$, is a health-related enabling factor, we have $\mathcal{T,Q}\in \mathcal{K}$ such that $\mathcal{T}=\sum_{i=1}^{n} \mathcal{T}_i \subsetneq\mathcal{K}$ and $\mathcal{Q}=\sum_{i=1}^{n} \mathcal{Q}_i \subsetneq\mathcal{K}$.  There also exists $S\in \mathcal{S}$ such that $S=\{\mathcal{B,K}\}=\{\mathcal{P,N,T,Q}\}$ and $S\subset\mathcal{S}$.  The basic utility function for a particular individual is thus expressed as follows:
\begin{equation}
    U=u(S,\mathcal{H}) = u \left(\sum_{i=1}^{n} \mathcal{P}_i ,\{\mathcal{N}_1 +\ldots+ \mathcal{N}_n\} ,\{\mathcal{T}_1+\ldots+ \mathcal{T}_n\},\sum_{i=1}^{n} \mathcal{Q}_i ,\mathcal{H}\right)\label{eq 2.1}
\end{equation}

Following \cite{r65}, health seeking individuals demand health because good health is desirable and they thus invest in $\mathcal{T}$ and $\mathcal{Q}$ in order to produce their own health.  We can, therefore, express a particular individual's health production function at a certain point in time as $\mathcal{H}=H(\mathcal{T,Q},\mu)$, where $\mu$  denotes the unobservable biological endowment that affects the individual's health status.
\begin{remark}
Throughout this paper, the letters $P,N,T,Q,K$ and $H$ may also denote the utility generating inputs in Equation (\ref{eq 2.1}).  We, therefore, have that $P\in \mathcal{P}$, $N\subset \mathcal{N}$, $T\in \mathcal{T}$, $\mathcal{Q}\supset Q$, $\mathcal{K}\supset K$ and $H\in \mathcal{H}$.  Since $\mathcal{P}$, $\mathcal{N}$, $\mathcal{T}$, $\mathcal{Q}$, $\mathcal{K}$ and $\mathcal{H}$ contain more elements than $P,N,T,Q,K$ and $H$, one may think of $\mathcal{P}$, $\mathcal{N}$, $\mathcal{T}$, $\mathcal{Q}$ and $\mathcal{K}$ as sets of all the utility generating inputs that a particular individual can conceivably consume and $\mathcal{H}$ as a health production function with all the health enhancing inputs that the individual can conceivably acquire. 
\end{remark}
\begin{remark}
The sets $\mathcal{P}$, $\mathcal{N}$, $\mathcal{T}$ and $\mathcal{Q}$ are represented as sums of their constituent elements because individuals typically consume a combination of inputs.  We expect $U=u(.,.)$ to be additively separable in $\mathcal{T}$ and $\mathcal{Q}$ but not necessarily in $\mathcal{P}$ and $\mathcal{N}$.  Since $\mathcal{P}$, $\mathcal{N}$, $\mathcal{T}$ and $\mathcal{Q}$ are also disjoint sets, we have that $\mathcal{T}\cup (\mathcal{P}\cap \mathcal{N})=(\mathcal{T}\cup \mathcal{P})\cap (\mathcal{T}\cup \mathcal{N})=\mathcal{T}=\mathcal{K}\triangle \mathcal{Q}=\mathcal{K}\cup \mathcal{Q}\setminus \mathcal{K}\cap \mathcal{Q}$ and that $\chi (\mathcal{K}\triangle \mathcal{Q})=\chi_{\mathcal{K}}\oplus \chi_{\mathcal{Q}}=\mathcal{T}$.  It is also the case that $[\mathcal{T}\in \mathcal{K}\triangle \mathcal{Q}]=[\mathcal{T}\in \mathcal{K}]\oplus [\mathcal{T}\in \mathcal{Q}]$ and that $\mathcal{T}\cap (\mathcal{P}\cup \mathcal{N})=(\mathcal{T}\cap \mathcal{P})\cup (\mathcal{T}\cap \mathcal{N})=\left\{  \right\}$.  We also have that $\bigcup_{\mathcal{K,B}\in S} S=\mathcal{P}+\mathcal{N}+\mathcal{T}+\mathcal{Q}$ since $\exists$ at least one utility generating input $l\in S$ such that $l$ is a member of at least one of $\mathcal{P,N,T}$ and $\mathcal{Q}$ .
\end{remark}
\begin{definition}\label{def1}
Suppose that $A$ and $M$ are such that $A\subset \mathbb{Z}$ and $M\subset \mathbb{Z}$.  Then $A$ and $M$ are said to be disjoint sets if $A\cap M\Leftrightarrow \emptyset$.
\end{definition}
In addition to the constraints that we have imposed on individuals' consumption bundles in that $S\in \mathbb{R}_{\textgreater 0}^n$, individuals are also limited to the inputs that they can afford.  A feasible consumption bundle for a particular individual, therefore, is that which the individual is able to acquire given the individual's budget.
\begin{definition}\label{def2}
Let $S^{'}$ denote the set of utility generating inputs in $S$ that are also part of $\mathcal{P,N,T}$ and $\mathcal{Q}$ depending on a particular individual's consumption pattern.  Let also $w$ denote the individual's wealth level and $p$ the vector of prices for inputs $S^{'}$.  Then the individual's feasible consumption bundle consists of the utility generating inputs in the set $\{S^{'}\in S:p.S^{'}\leq w\}$.  For a market economy where the consumers are price takers, this set is the so called Walrasian budget set (see, for example, \cite{r12} and \cite{r55}).  The corresponding budget constraint can, therefore, be written as $p_{1}.S_{1}^{'}+\ldots+p_{n}.S_{n}^{'}\leq w$.
\end{definition}
\subsection{Model assumptions}\label{subsection 2.2}
We make three key assumptions that enable us to conduct our analysis.  These assumptions provide a theoretical basis upon which we can make causal inferences about the expected relationship between the demand for telecare and our outcome measure of interest.
\begin{asu}[Implicit input prices]\label{Assumption 1}

For all $\{P,N,T,Q\}\in S^{'}$, there exists $P=\{p_P , p_N , p_T , p_Q\}$ such that $p_P\in P, p_N\in N, p_T\in T$ and $p_Q\in Q$.

\end{asu}
Assumption \ref{Assumption 1} states that the utility generating input prices are implicit.  We make this assumption because there are some utility generating inputs in our modeling framework such as $\mathcal{T}$ and $\mathcal{Q}$ that are traded in the market place and others such as $\mathcal{P}$ and $\mathcal{N}$ that do not have natural markets.  We also do not have data on the actual input prices for $\mathcal{T}$ and $\mathcal{Q}$ but we know that individuals usually incur costs (which may include opportunity costs and other intangible costs) in the course of acquiring inputs.  Even though $\mathcal{P}$ and $\mathcal{N}$ are not marketed, we can still be able to rigorously think about their cost implications.  For instance, an elderly individual with a chronic condition is expected to have a comparatively high use of health and social care resources, all else equal.  The difference in resource use between the elderly individual with a chronic condition and another individual who is neither elderly nor suffering from a chronic condition can be thought of as the price associated with $\mathcal{P}$ and $\mathcal{N}$.  Furthermore, the fact that we are able to observe $S^{'}$ means that the input costs have already been incurred.  The implication of this assumption, therefore, is that we are able to conduct our analysis in the absence of observable prices and still have theoretical validity.
\begin{asu}[Weak axiom of revealed preference]\label{Assumption 2}
The choice structure $C(\mathcal{M})$ for a particular individual who uses $T$ is such that if for some budget set $M\in\mathcal{M}$ and $M\in S^{'}$ with $T,Q\in M$ and the individual chooses $T$ such that $T\in C(M)$, then $\forall M^{'}\in\mathcal{M}$ with $T,Q\in M^{'}$ and $Q\in C(M^{'})$, $\bigvee_{m}\in C(M^{'})$ that is $T$ so that we also have $T\in C(M^{'})$.  Suppose that $\exists\mathcal{Y}\subset S$ and $\mathcal{Y}\not\subset K$ but $S^{'}\in\mathcal{S}$ now contains $\mathcal{Y}$.  Then we may have that (i) $T\succ Q\Leftrightarrow T\succcurlyeq Q$ but not $Q\succcurlyeq T$; (ii) $Q\succ \mathcal{Y}\Leftrightarrow Q\succcurlyeq\mathcal{Y}$ but not $\mathcal{Y}\succcurlyeq Q$, and (iii) $\mathcal{Y}\succ T\Leftrightarrow\mathcal{Y}\succcurlyeq T$ but not $T\succcurlyeq\mathcal{Y}$.
\end{asu}
Assumption \ref{Assumption 2} states that a telecare user prefers to use $\mathcal{T}$ instead of $\mathcal{Q}$ whenever presented with a feasible consumption bundle that contains both $\mathcal{T}$ and $\mathcal{Q}$.  So in the event that input $\mathcal{Q}$ is chosen, it must be the case that $\mathcal{T}$ is also chosen.  This assumption is rather conservative as it does not require an individual's preference relation $\succcurlyeq$ to be transitive as is the case with the strong and generalized axioms of revealed preference (\cite{r51,r30,r59}).  Therefore, if an alternative set of utility generating inputs $\mathcal{Y}$ is to be considered such that $\mathcal{Y}$ is not part of either $\mathcal{T}$ or $\mathcal{Q}$, then a situation may arise where (i) telecare use is at least as preferable as $\mathcal{Q}$ but $\mathcal{Q}$ is not at least as preferable as telecare use; (ii) $\mathcal{Q}$ is at least as preferable as $\mathcal{Y}$ but $\mathcal{Y}$ is not at least as preferable as $\mathcal{Q}$, and (iii) $\mathcal{Y}$ is at least as preferable as telecare use but telecare use is not at least as preferable as $\mathcal{Y}$.  By making this assumption, we acknowledge that individuals' preferences could change over time given that their biological endowments are typically unobservable and it may also no longer be possible to slow down the expected deterioration of $\mathcal{H}$ with the same set of inputs that were previously efficacious.  Assumption \ref{Assumption 2} is thus the reasonable stance to take as it is not uncommon for health seeking individuals--particularly those suffering from chronic conditions such as cancer--with a predefined ranking of preferences for health inputs to resort to some form of palliative care at a later point in time having lost faith in the effectiveness of their earlier treatment choices.
\begin{asu}[Wealth equivalence]\label{Assumption 3}
Any feasible consumption bundle $S^{'}$ that is represented by $S^{'}(p,w)$ price wealth situation must be such that $w=w^{'}$ for all $w\in\mathcal{W}$, ceteris paribus, if $\Vert U_{i}:\mathbb{R}^{\mathrm{L}}_{+}\times\mathbb{R}\rightarrow\mathbb{R} \|_{\infty}=\varphi$ for all $i=i,\ldots,n$.  If this is not the case, it follows then that
\begin{enumerate}[label=\roman*.]
    \item Since $p.S^{'}\leq w,\exists$ some $S^{'}(p,w)\in\mathcal{S}$ where $w^{'}\textgreater w$ and thus $\sup\{\vert f(S^{'},H) \vert\}$ where $S^{'}(p,w^{'})\in\mathcal{S}$ is greater than $\bigvee_{ S,H\in U}\{U:S(p,w)\}$, all else equal.
    \item There could be instances where $p_{Q}\neq p_{T}$.   
\end{enumerate}
\end{asu}
Assumption \ref{Assumption 3} states that the wealth levels for all individuals have to be the same if we are to attribute utility gains to the consumption of utility generating inputs.  This assumption is useful since we are interested in deriving the maximal utility that is gotten from the consumption of $\mathcal{T}$.  Because the budget constraint is represented by the inequality $\leq w$, we know that the utility functions of the less affluent individuals yield lower utility than those of their more affluent counterparts, all else equal, due to the fact that they have to deplete their wealth to acquire a given consumption bundle.  In order to eliminate such instances, therefore, the analysis is limited to the wealth levels that are just sufficient to purchase the utility generating inputs i.e. the budget hyperplane or equivalently the budget line when considering only two inputs in the consumption bundle.  This is demonstrated in Figure \ref{Figure 2} for a hypothetical case of $K_{1}$ and $\mathcal{B}_{2}$ as the utility generating inputs. 
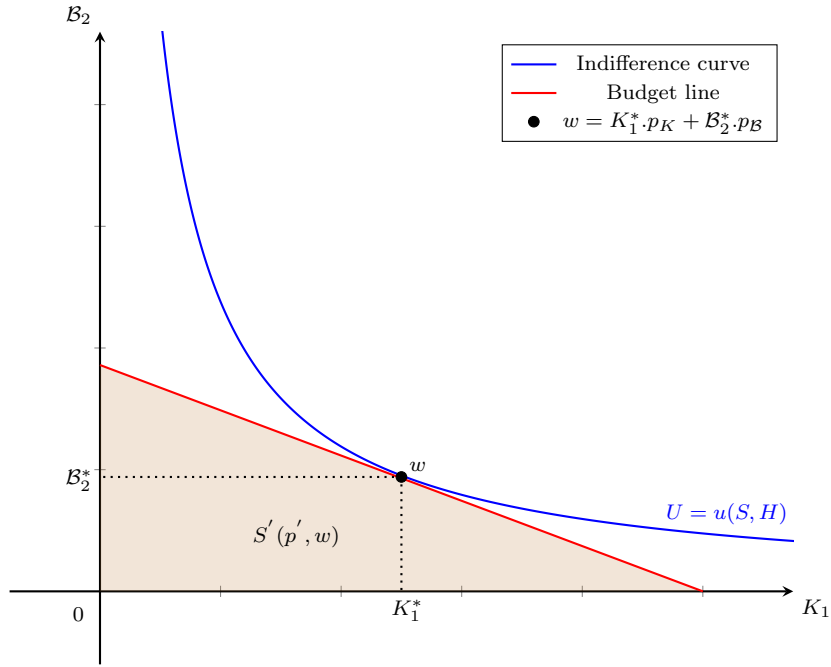
\begin{figure}[t!]
\centering
\begin{tikzpicture}
\begin{axis}[standard,
width=12cm, 
height=10cm,
xlabel={$K_{1}$},
ylabel={$\mathcal{B}_{2}$},
xticklabel=\empty,
yticklabel=\empty,
samples=1000,
xmin=0,  xmax=5,
ymin=0, ymax=20]

\node[anchor=center, label=south west: $0$] at (axis cs:0,0){};
\addplot [
        thick,
        blue,
        domain=0:9,
    ] {11.9/x};
 \addlegendentry{Indifference curve}

\node[text width=2cm, anchor=west, right] at (axis cs:1.2,1.5) [anchor=south west] {$S^{'}(p^{'},w)$};
\node [color=blue] at (5.2,3.2) {$U=u(S,H)$};
\addplot[
        name path=F,
        domain=0.00:5,
        color=red,
        thick,
        samples=1000
]
    {9.3 - (9.3/5)*x};
\addlegendentry{Budget line}

 \addplot[
        only marks,
        mark=*,
        mark size=2pt,
        color=black,
    ] coordinates {(2.5,4.7)};  
\addlegendentry{$w=K_{1}^{*}.p_{K}+\mathcal{B}_{2}^{*}.p_{\mathcal{B}}$}
\path[name path=xAxis] (axis cs:0,0) -- (axis cs:5,0);
\addplot[fill=brown, fill opacity=0.2] fill between [of=F and xAxis];
\draw[dotted,thick] (2.5,4.7) -- (2.5,0);
\node[text width=2cm, anchor=west, right] at (axis cs:2.35,-1.55) [anchor=south west] {$K_{1}^{*}$};
\draw[dotted,thick] (2.5,4.7) -- (0,4.7);
\node[text width=2cm, anchor=west, right] at (axis cs:-0.35,3.8) [anchor=south west] {$\mathcal{B}_{2}^{*}$};
\node[text width=2cm, anchor=west, right] at (axis cs:2.5,4.6) [anchor=south west] {$w$};

\end{axis}
\end{tikzpicture}
\caption{Individuals' utility maximizing consumption bundle at point $w$} \label{Figure 2}
\end{figure}
According to the figure, the feasible consumption bundle that maximizes utility comprises the quantities of $K_{1}$ and $\mathcal{B}_{2}$ at the point of tangency between a particular individual's indifference curve and the individual's budget constraint.  Since the analysis focuses only on the individuals who have just enough resources needed to acquire the utility generating inputs and we also know that telecare users choose $\mathcal{T}$ instead of $\mathcal{Q}$ to be part of their utility functions, it follows then that the price of $\mathcal{Q}$, denoted by $p_{Q}$, is the relative price of $\mathcal{T}$, holding other factors constant.  Accordingly, a particular individual who is a telecare user chooses to use telecare strictly because of preference and not because it costs less than its substitutes.

We now proceed to present Theorem \ref{Theorem 1} having made the assumptions needed to identify the demand for telecare.  The theorem shows how one can derive the demand function for telecare given a particular individual's budget constraint and the individual's health production function.  We show that the demand for telecare depends in part on an individual's unobservable biological endowment. 
\begin{theorem}\label{Theorem 1}

Suppose Assumptions $1-3$ all hold and let $p_{\mathcal{B}Q}\subset p$ such that $p_{\mathcal{B}Q}=\{p_{P},p_{N},p_{Q}\}$.  Then for any utility maximizing behavior that is characterized by $U=u(S_{\mathcal{B}Q}^{'},H)$ with $p=\{p_{\mathcal{B}},p_{K}\}$ as the input prices, the demand function for $T$ will be given by $T=t(p,w,\mu)$ and the indirect utility function by $V=v(\mathcal{B},T,p_{\mathcal{B}Q},w,\mu)$.

\end{theorem}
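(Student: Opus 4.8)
The plan is to read Theorem~\ref{Theorem 1} as an existence-and-representation statement for the Marshallian (demand) solution of a constrained utility-maximization problem in which the health production function has been substituted into the objective, so that the unobservable endowment $\mu$ enters the reduced problem merely as a parameter.

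First I would form the reduced objective: substituting $\mathcal{H}=H(\mathcal{T},\mathcal{Q},\mu)$ into Equation~(\ref{eq 2.1}) produces a function of $\bigl(\sum_i\mathcal{P}_i,\{\mathcal{N}_1+\ldots+\mathcal{N}_n\},\sum_i\mathcal{T}_i,\sum_i\mathcal{Q}_i\bigr)$ and $\mu$ on the nonnegative orthant, which by the remarks following~(\ref{eq 2.1}) is additively separable in $\mathcal{T}$ and $\mathcal{Q}$. By Assumption~\ref{Assumption 1} every input in $S'$ carries an implicit price, so the feasible set is exactly the Walrasian budget set $\{S'\in S:\,p\cdot S'\le w\}$ of Definition~\ref{def2} with $p=\{p_{\mathcal{B}},p_K\}$ and coordinates $p_P,p_N,p_T,p_Q$. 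Assumption~\ref{Assumption 3} then lets me replace the inequality by the binding budget hyperplane $p_P P+p_N N+p_T T+p_Q Q=w$---the tangency configuration of Figure~\ref{Figure 2}---since equalizing $\|U_i\|_\infty=\varphi$ across individuals removes the utility loss associated with depleting wealth.

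Next I would use Assumption~\ref{Assumption 2}. On the budget hyperplane, whenever a feasible bundle contains both $\mathcal{T}$ and $\mathcal{Q}$, the weak axiom forces $T\in C(M')$ as soon as $Q\in C(M')$; hence at any optimum the chosen level of $\mathcal{Q}$ is tied to that of $\mathcal{T}$ rather than determined independently, and the discussion after Figure~\ref{Figure 2} identifies $p_Q$ as the relative price of $\mathcal{T}$. This collapses $p_T$ into $p_Q$ and re-expresses the problem in the reduced price vector $p_{\mathcal{B}Q}=\{p_P,p_N,p_Q\}$. Under the convexity/tangency structure implicit in Figure~\ref{Figure 2}, the Lagrangian first-order conditions---marginal rates of substitution equal to price ratios, together with the binding budget equation---form a system in $(P,N,T,Q)$ whose solution depends on $(p,w)$ and, through the $\mu$-dependence of $H(\mathcal{T},\mathcal{Q},\mu)$, on $\mu$; solving for the telecare coordinate gives $T=t(p,w,\mu)$. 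Substituting the optimal bundle back into the reduced objective and invoking the $\mathcal{Q}$--$\mathcal{T}$ tie and the price collapse just established writes the attained utility as a function of $\mathcal{B}=\{P,N\}$, $T$, $p_{\mathcal{B}Q}$, $w$ and $\mu$, i.e. $V=v(\mathcal{B},T,p_{\mathcal{B}Q},w,\mu)$.

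The step I expect to be the main obstacle is making the passage from the full problem in $p$ to the reduced problem in $p_{\mathcal{B}Q}$ rigorous via Assumption~\ref{Assumption 2}: the weak axiom alone does not yield a transitive, hence utility-representable, preference over the enlarged input set, so I would need to argue carefully that it nonetheless pins down the $\mathcal{Q}$-coordinate of the optimum as a function of the $\mathcal{T}$-coordinate, so that eliminating $p_T$ in favour of $p_Q$ loses no choice-relevant information. The remainder is the routine envelope-theorem bookkeeping of consumer theory, carried out with $\mu$ held fixed as a parameter throughout.
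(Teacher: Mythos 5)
Your proposal is correct in substance and runs on the same consumer-theory machinery as the paper (constrained maximization under Assumptions 1--3, Lagrangian first-order conditions, solve for the telecare coordinate, back-substitute for the indirect utility), but the route differs in three respects worth noting. First, you substitute $H(\mathcal{T},\mathcal{Q},\mu)$ into the objective so that $\mu$ is a parameter of a reduced problem, whereas the paper keeps $H=H(T,Q,\mu)$ as a separate constraint with its own multiplier $\lambda_{2}$ and works with the two-multiplier Lagrangian; the two treatments are equivalent for the existence claim, but the paper's version is what produces the explicit additive decomposition of demand in Equation (\ref{eq 2.2}) and hence the six-component economic interpretation and the comparative statics of Lemma \ref{lemma 1}, which the paper invokes to write $T=t(p,w,\mu)$ --- your implicit-function appeal to the FOC system bypasses that machinery and is cleaner, at the cost of not delivering those by-products. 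Second, you deploy Assumption \ref{Assumption 2} early, to collapse $p_{T}$ into $p_{Q}$ before solving; the paper instead derives all four input demands $P=r(\cdot),N=n(\cdot),Q=q(\cdot),T=t(\cdot)$ and only uses Assumptions \ref{Assumption 2} and \ref{Assumption 3} in the second step, asserting $K^{'}(p_{Q},w)\equiv K^{'}(p_{T},w)$; your worry that WARP alone does not give a transitive, representable preference is apt, but the paper operates at exactly the same level of informality here, so this is not a gap relative to the paper's own standard. Third, be careful with the indirect utility step: substituting the \emph{entire} optimal bundle back into the objective would give a value function of $(p,w,\mu)$ alone, not one with $\mathcal{B}$ and $T$ as arguments; the paper obtains $V=v(\mathcal{B},T,p_{\mathcal{B}Q},w,\mu)$ precisely by eliminating only $Q$ --- substituting $Q=q(p,w,\mu)$ into $H$ and the result into $U$ --- so your ``$\mathcal{Q}$--$\mathcal{T}$ tie'' needs to be used to substitute out just the $Q$-coordinate rather than the whole optimum for the stated functional form to come out.
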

\begin{proof}
By Assumptions (\ref{Assumption 1}) and (\ref{Assumption 3}), $U=u(.,.)$ can be maximized $\text{s.t}$ $\{p.S^{'}=w=p_{1}.S_{1}^{'}+\ldots+p_{n}.S_{n}^{'}\}$ and $H=H(T,Q,\mu)$.  Recall that since $S^{'}$ is defined as $S^{'}\in S$ such that $S=\{\mathcal{B},K\}$ with $P,N\in\mathcal{B}$ and $K=\{T,Q\}$, we can also write the optimization problem as
\[
\max_{\{\mathcal{B},T\}\in S,H} \quad U\left(P,N,T,H\right)\hspace*{0.5em} \text{s.t} \hspace*{0.5em}w=p_{P}P+p_{N}N+p_{T}T+p_{Q}Q\hspace*{0.5em} \text{and} \hspace*{0.5em}H=H\left(T,Q,\mu\right)
\]
Using the Lagrange method, the Augmented Objective Function (AOF) for the optimization problem is given by
\begin{eqnarray*}
  \mathcal{L}\left(\mathcal{B},T,H:\lambda,\lambda_{2}\right) & = & U(.)+\lambda \left(p_{P}P+p_{N}N+p_{T}T+p_{Q}Q-w\right)+\lambda_{2}H(.)\\
 & = &   U(.)+\lambda p_{P}P+\lambda p_{N}N+\lambda p_{T}T+\lambda p_{Q}Q-\lambda w+ \lambda_{2}H(.)\\
  & = &   U(.)+\left(\lambda p_{P}P+\lambda p_{N}N+\lambda p_{T}T+\lambda p_{Q}Q-\lambda w\right)+ \lambda_{2}H(.)\\
\end{eqnarray*}
Note that the AOF is such that $\mathcal{L}(.,.,.:\lambda,\lambda_{2})$ has the utility function $U=U(.)$, four price components of $P,N,T$ and $Q$, and components $\lambda w$ and $\lambda_{2}H(.)$.  We obtain the first order conditions needed to get $\sup\{U:\mathbb{R}_+^n\rightarrow\mathbb{R}\}$ by computing ($\mathrm{d}\mathcal{L}/\mathrm{d}/{T}),(\mathrm{d}\mathcal{L}/\mathrm{d}\lambda)$ and $(\mathrm{d}\mathcal{L}/\mathrm{d}\lambda_{2})$ and then equating the differentials to zero.  For the case of ($\mathrm{d}\mathcal{L}/\mathrm{d}/{T}$) we, therefore, have that
\begin{eqnarray*}
  \diff{\mathcal{L}}{T} &=&  \diffp{U}{T}+\diffp{U}{N}\times\diff{N}{T}+\diffp{U}{P}\times\diff{P}{T}+\diffp{U}{H}\times\diff{H}{T}\\
 &+&\diff{\left(\lambda p_{T}T\right)}{T}+\diffp{\lambda_{2}H(.)}{T}+\diffp{\lambda_{2}H(.)}{Q}\times\diff{Q}{T}+\diffp{\lambda_{2}H(.)}{\mu}\times\diff{\mu}{T}
\end{eqnarray*}
\begin{eqnarray*}
\diff{\mathcal{L}}{T} &=&  \diffp{U}{T}+\diffp{U}{N}\times\frac{\mathrm{d}}{\mathrm{d}T}\left(\int_{T}\diff{N}{T} \,dT\right)+\diffp{U}{P}\times\frac{\mathrm{d}}{\mathrm{d}T}\left(\int_{T}\diff{P}{T} \,dT\right)\\
&+&\diffp{U}{H}\times\frac{\mathrm{d}}{\mathrm{d}T}\left(\int_{T}\diff{H}{T} \,dT\right)+\diffp{(\lambda p_{T}T)}{T}\times\diff{T}{T}+\diffp{\lambda_{2}H(.)}{T}\\
&+&\diffp{\lambda_{2}H(.)}{Q}\times\frac{\mathrm{d}}{\mathrm{d}T}\left(\int_{T}\diff{Q}{T} \,dT\right)+\mu
\end{eqnarray*}
\begin{eqnarray*}
\diff{\mathcal{L}}{T} &=&  \diffp{U}{T}+\diffp{U}{N}\times\frac{\mathrm{d}}{\mathrm{d}T}\left(\int_{T}\diff{N}{T} \,dT\right)+\diffp{U}{P}\times\frac{\mathrm{d}}{\mathrm{d}T}\left(\int_{T}\diff{P}{T} \,dT\right)\\
&+&\diffp{U}{H}\times\frac{\mathrm{d}}{\mathrm{d}T}\left(\int_{T}\diff{H}{T} \,dT\right)+\diffp{(\lambda p_{T}T)}{T}+\diffp{\lambda_{2}H(.)}{T}\\
&+&\diffp{\lambda_{2}H(.)}{Q}\times\frac{\mathrm{d}}{\mathrm{d}T}\left(\int_{T}\diff{Q}{T} \,dT\right)+\mu
\end{eqnarray*}
Suppose that $\exists\zeta\subset\mathbb{R}$ such that $\zeta$ contains $(\partial U/\partial T),(\partial U/\partial N),(\partial U/\partial P)$ and $(\partial U/\partial H)$ where $(\partial (.)/\partial (.))\in\zeta$ are the partial derivatives of $U=U(.)$ with respect to $P,N,T$ and $H$, and also that $\exists\mathcal{J}=\{(\partial H/\partial T),(\partial H/\partial Q)\}$ where $(\partial (.)/\partial (.))\in\mathcal{J}$ are the partial derivatives of $H=H(.)$ with respect to $T$ and $Q$.  Then $\mathrm{d}\mathcal{L}/\mathrm{d}/{T}$ can be further simplified as
\begin{eqnarray*}
\diff{\mathcal{L}}{T} &=&  \zeta_{T}+\zeta_{N}\times\frac{\mathrm{d}}{\mathrm{d}T}\left(\int_{T}\diff{N}{T} \,dT\right)+\zeta_{P}\times\frac{\mathrm{d}}{\mathrm{d}T}\left(\int_{T}\diff{P}{T} \,dT\right)\\
&+&\zeta_{H}\times\frac{\mathrm{d}}{\mathrm{d}T}\left(\int_{T}\diff{H}{T} \,dT\right)+\diffp{(\lambda p_{T}T)}{T}+\lambda_{2}\mathcal{J}_{T}\\
&+&\lambda_{2}\mathcal{J}_{Q}\times\frac{\mathrm{d}}{\mathrm{d}T}\left(\int_{T}\diff{Q}{T} \,dT\right)+\mu
\end{eqnarray*}
which can also be expressed as
\begin{eqnarray*}
\diff{\mathcal{L}}{T} &=&  \zeta_{T}+\zeta_{N}\times\frac{\mathrm{d}}{\mathrm{d}T}\left(\int_{T}\diff{N}{T} \,dT\right)+\zeta_{P}\times\frac{\mathrm{d}}{\mathrm{d}T}\left(\int_{T}\diff{P}{T} \,dT\right)\\
&+&\zeta_{H}\times\frac{\mathrm{d}}{\mathrm{d}T}\left(\int_{T}\diff{H}{T} \,dT\right)+\lambda p_{T}+\lambda_{2}\mathcal{J}_{T}\\
&+&\lambda_{2}\mathcal{J}_{Q}\times\frac{\mathrm{d}}{\mathrm{d}T}\left(\int_{T}\diff{Q}{T} \,dT\right)+\mu
\end{eqnarray*}
so that $\mathrm{d}\mathcal{L}$ is given by
\begin{eqnarray*}
\mathrm{d}\mathcal{L} &=&  \zeta_{T}\mathrm{d}T+\zeta_{N}\times\mathrm{d}\left(\int_{T}\diff{N}{T} \,dT\right)+\zeta_{P}\times\mathrm{d}\left(\int_{T}\diff{P}{T} \,dT\right)\\
&+&\zeta_{H}\times\mathrm{d}\left(\int_{T}\diff{H}{T} \,dT\right)+\lambda p_{T}\mathrm{d}T+\lambda_{2}\mathcal{J}_{T}\mathrm{d}T\\
&+&\lambda_{2}\mathcal{J}_{Q}\times\mathrm{d}\left(\int_{T}\diff{Q}{T} \,dT\right)+\mu
\end{eqnarray*}
Following the same logic that was used to derive $\mathrm{d}\mathcal{L}$ by first computing ($\mathrm{d}\mathcal{L}/\mathrm{d}/{T}$), we can also derive $\mathrm{d}\mathcal{L}$ by computing ($\mathrm{d}\mathcal{L}/\mathrm{d}\lambda$) and ($\mathrm{d}\mathcal{L}/\mathrm{d}\lambda_{2}$) as
\begin{eqnarray*}
\mathrm{d}\mathcal{L} &=& \left[p_{P}\left(\int_{\lambda}\diff{P}{\lambda} \,d\lambda\right)\right]\mathrm{d}\lambda                                                                                                       +\left[p_{N}\left(\int_{\lambda}\diff{N}{\lambda} \,d\lambda\right)\right]\mathrm{d}\lambda+\left[p_{T}\left(\int_{\lambda}\diff{T}{\lambda} \,d\lambda\right)\right]\mathrm{d}\lambda\\
&+& \left[p_{Q}\left(\int_{\lambda}\diff{Q}{\lambda} \,d\lambda\right)\right]\mathrm{d}\lambda-w\mathrm{d}\lambda\\
\text{and}\\
\\
 \mathrm{d}\mathcal{L} &=& H\left(.\right)\mathrm{d}\lambda_{2}                                                                                                      
\end{eqnarray*}
By equating $\mathrm{d}\mathcal{L}$ to zero, we can write the demand equation for $T$ by solving for the integrals and writing $T$ on the left-hand side as follows:
\begin{equation}
T= \mathlarger{\mathlarger{\sum}}_{\mathcal{B},T,H\in\mathcal{A}}\frac{\zeta_{\mathcal{A}}\mathrm{d}\mathcal{A}}{p_{T}\mathrm{d}\lambda}+\mathlarger{\mathlarger{\sum}}_{T,Q\in K}\frac{\lambda_{2}\mathcal{J}_{K}\mathrm{d}K}{p_{T}\mathrm{d}\lambda}-\mathlarger{\mathlarger{\sum}}_{\mathcal{B},Q\in S^{'}}\frac{p_{\mathcal{B}Q} S_{\mathcal{B}Q}^{'}}{p_{T}}+\frac{\lambda\mathrm{d}T}{\mathrm{d}\lambda}+\frac{w}{p_{T}}+\mu\label{eq 2.2}       
\end{equation}
Equation \ref{eq 2.2} shows that the demand for $T$ for a particular individual can be arrived at by computing (i) the maximal utility that results from $T,N$ and $P$ when the budget constraints are relaxed i.e. $[(\partial U/\partial T)/p_{T}\times(\mathrm{d}T/\mathrm{d}\lambda)]+[(\partial U/\partial N)/p_{T}\times(\mathrm{d}N/\mathrm{d}\lambda)]+[(\partial U/\partial P)/p_{T}\times(\mathrm{d}P/\mathrm{d}\lambda)]+[(\partial U/\partial H)/p_{T}\times(\mathrm{d}H/\mathrm{d}\lambda)]$; (ii) the maximal utility that results from the consumption of $T$ and $Q$ when the budget constraints are relaxed and when the individual has become more efficient at producing health i.e. $[\lambda_{2}(\partial H/\partial T)/p_{T}\times(\mathrm{d}T/\mathrm{d}\lambda)]+[\lambda_{2}(\partial H/\partial Q)/p_{T}\times(\mathrm{d}Q/\mathrm{d}\lambda)]$; (iii) the quantity of $T$ that would be demanded if all the resources earmarked for acquiring $N,P$ and $Q$ were to be used to acquire $T$ i.e. $[(p_{N}N/p_{T})+(p_{P}P/p_{T})+(p_{Q}Q/p_{T})]$; (iv) the maximal utility that would result from $T$ due to a unit relaxation of the budget constraints i.e. $(\lambda\mathrm{d}T/\mathrm{d}\lambda)$; (v) the quantity of $T$ that would be demanded if the individual's entire wealth were to be used to acquire $T$ i.e. $w/p_{T}$, and the individual's unobservable biological endowment.
\begin{lemma}\label{lemma 1}
Given Assumptions (\ref{Assumption 1}), (\ref{Assumption 2}) and (\ref{Assumption 3}), for as long as $T$ assumes the general form specified in Equation \ref{eq 2.2}, then (i) $T\propto (1/p_{T},p_{Q})$, $T\propto p_{N}$, $T\propto p_{P}$, $T\propto\mu$ and $T\propto w$ must always be true, and (ii) $T$ and $Q$ are normal.
\end{lemma}
Lemma \ref{lemma 1} allows for inferences to be made regarding the expected relationships between the quantity of $T$ demanded, the input prices, the unobservable health production technology and individuals' wealth levels following economic theory and mathematical intuition.  By Lemma \ref{lemma 1}, we can express $T$ in Equation \ref{eq 2.2} as $T=t(p,w,\mu)$.  This completes the first part of the proof.

The input demand functions for $P,N$ and $Q$ can also be derived in the same fashion as the derivation of $T=t(.,.,.)$ to yield $P=r(p,w,\mu)$ as the demand function for $P$, $N=n(p,w,\mu)$ as the demand function for $N$ and $Q=q(p,w,\mu)$ as the demand function for $Q$.  By Assumption \ref{Assumption 2}, if $T\succcurlyeq Q$ then $T$ is chosen $\forall K^{'}(p,w)\in\mathcal{S}$ where $K^{'}=\{T,Q\}$.  If $T\succcurlyeq Q$, then by Assumption \ref{Assumption 3} we also know that $K^{'}(p_{Q},w)\equiv K^{'}(p_{T},w)$ since $w=\overline{w}$ for all users of $T$ and $Q$.  We can, therefore, formulate the indirect utility function $V=v(\mathcal{B},T,p_{\mathcal{B}Q},w,\mu)$ by first substituting $Q=q(.)$ into $H=H(T,Q,\mu)$ and further substituting the resultant function into $U=U(.)$.  This completes the second part of the proof.
\end{proof}
\begin{corollary}\label{Corollary 1}
 
Suppose that Assumptions (\ref{Assumption 1}), (\ref{Assumption 2}) and (\ref{Assumption 3}) hold and let the empirical specification of $V=v(.)$ be $Y=f(x)$.  Let also 
\begin{enumerate}
\item $\lim\limits_{\vert X \vert \to \infty} Y=E(Y\vert X)$
\item $\lim\limits_{\vert X \vert \to \infty} \hat{Y}=E(\hat{Y}\vert X)$
\end{enumerate}
Then $\Vert E(Y\vert X)-E(\hat{Y}\vert X) \|=\delta$ such that (i) $\delta=\{\epsilon,\mu\}$ and (ii) $\epsilon \perp \mu$.
 
\end{corollary}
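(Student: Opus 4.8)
The plan is to start from the indirect utility function delivered by Theorem~\ref{Theorem 1}, namely $V = v(\mathcal{B}, T, p_{\mathcal{B}Q}, w, \mu)$, and to read its empirical counterpart $Y = f(x)$ against it. The crucial observation is that $\mu$ enters $v(\cdot)$ as a bona fide argument but is, by construction, unobservable, so the covariate vector actually available to the analyst is $X = \{\mathcal{B}, T, p_{\mathcal{B}Q}, w\}$ with $\mu$ suppressed. Writing the structural relation as $Y = f(X,\mu) + \epsilon$, where $\epsilon$ collects the idiosyncratic disturbance, and the fitted model as $\hat{Y} = \hat{f}(X)$, the object of interest is the discrepancy between the two conditional means once the hypothesised limits are invoked.

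First I would take conditional expectations given $X$, obtaining $E(Y \mid X) = E\big(f(X,\mu)\mid X\big) + E(\epsilon \mid X)$ and comparing it with $E(\hat{Y}\mid X) = \hat{f}(X)$. Using hypotheses (1) and (2) of the statement, as $|X|\to\infty$ the realisations $Y$ and $\hat{Y}$ collapse onto their respective conditional means, so that $\|E(Y\mid X) - E(\hat{Y}\mid X)\|$ is precisely the part of $Y$ that $\hat{f}(X)$ fails to reproduce. That residual has exactly two ingredients: the contribution of the omitted endowment $\mu$, since $f$ depends on $\mu$ while $\hat{f}$ does not, and the stochastic error $\epsilon$, which no finite-dimensional specification in $X$ can absorb. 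Hence $\delta = \{\epsilon, \mu\}$, which is claim~(i).

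For claim~(ii) I would argue that $\epsilon$ and $\mu$ are orthogonal because they arise from disjoint channels of the model: $\mu$ is the biological primitive sitting in the health production function $H = H(T,Q,\mu)$ and thus in $v(\cdot)$, whereas $\epsilon$ is residual statistical noise -- measurement error in the recorded length of stay, sampling variation, and functional-form approximation error in $f$ -- none of which is generated by the individual's health endowment. Formally this invokes the modelling convention implicit in Assumptions~\ref{Assumption 1}--\ref{Assumption 3} and in the separation of the endowment channel from the observable-factor channel illustrated in Figure~\ref{Figure 1}, namely that the disturbance is mean-independent of every structural argument, in particular $E(\epsilon\mid\mu)=0$, which yields $\epsilon \perp \mu$.

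The step I expect to be the main obstacle is making the limiting statements $\lim_{|X|\to\infty} Y = E(Y\mid X)$ and its hatted analogue do genuine work: these are not standard convergence statements, so the argument must interpret the regime $|X|\to\infty$ as the one in which the conditioning information is rich enough (equivalently, the sample large enough) that only the truly unmodelled pieces survive, and then verify that this regime leaves $\mu$ and $\epsilon$ -- and nothing else -- in the gap. Closely tied to this is pinning down the orthogonality as independence rather than mere uncorrelatedness, which I would handle by stating the exogeneity of $\epsilon$ with respect to $\mu$ explicitly as part of the empirical specification inherited from $V=v(\cdot)$.
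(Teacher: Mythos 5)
Your proposal is correct in substance but takes a genuinely different route from the paper. You argue by direct structural decomposition: since $\mu$ is an argument of $V=v(\cdot)$ but is unobservable, the fitted specification $\hat{Y}=\hat{f}(X)$ necessarily omits it, so after conditioning on $X$ and invoking the two limits the gap $\Vert E(Y\vert X)-E(\hat{Y}\vert X)\Vert$ can contain only the omitted endowment $\mu$ and the idiosyncratic disturbance $\epsilon$, with orthogonality imposed as the modelling convention $E(\epsilon\mid\mu)=0$. The paper instead gives a constructive, inductive argument: it builds a family of nested designs $G_{1}\subset G_{2}\subset\cdots\subset G_{n}$, estimates $Y$ by least squares on each, shows the residual $\epsilon_{\hat{Y}_{k}}$ shrinks as the rank of the design grows, and then observes that even the full design $G_{n}$ --- assumed to explain $Y$ almost perfectly --- cannot absorb $\mu$ because $\operatorname{corr}(T,\mu)\neq 0$ by Theorem \ref{Theorem 1}, so $\mu$ survives in the gap alongside $\epsilon$ for every $G_{i}$, yielding $\delta=\{\epsilon,\mu\}$ and $\epsilon\perp\mu$. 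Your route is shorter and makes the omitted-variable point more transparent; the paper's route buys the explicit ``even with the best possible observable predictors'' claim that the surrounding discussion leans on, and it ties the survival of $\mu$ to the endogeneity of $T$ rather than merely to the unobservability of $\mu$. One caveat: your claim (ii) rests on an explicitly postulated exogeneity of $\epsilon$ with respect to $\mu$ rather than a derivation, but the paper's own justification of $\epsilon\perp\mu$ is no stronger (it is read off from $\mu$ entering the gap ``in addition to'' $\epsilon$), so this is not a gap relative to the paper --- you are simply more candid about where the assumption sits.
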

Corollary \ref{Corollary 1} points out, loosely speaking, that the empirical version of the indirect utility function does not sufficiently explain the utility maximizing behavior of a particular individual since part of the individual's health production technology is unobservable.  A closer look at the indirect utility function and health production function also shows that an endogeneity problem could arise in the course of estimating the impact of the demand for telecare on the length of stay in hospital since $T$ is contained in both functions.  We suspect that telecare use is potentially endogenous as individuals typically choose whether or not to use telecare and, therefore, the unobservable factors guiding their choices might be correlated with the length of stay in hospital.  Notice from Equation \ref{eq 2.1} that because a particular individual's utility is directly related to the individual's level of endowment and that a telecare user chooses to use telecare, a problem of unobserved heterogeneity could arise if the differences in endowments and preferences within the study population cause the effect of telecare on the length of stay in hospital to differ among various individuals.
\section{Econometric model}\label{Section 3}

\subsection{Model setup}\label{subsection 3.1}
We estimate the treatment effect using a count regression model since the variable for the length of stay in hospital, $Y$, is a count variable (\cite{r37,r41,r9}).  In particular, we have that $Y_{it}\in\mathbb{Z}_{0}^{+}$ where the weekly cross-sections over the $2010/2011$ financial year are indexed by $t\in(1,\ldots,52)$ and $i$ denotes the cross-section unit at time $t$.  We assume that $Y_{it}$ is generated by an unobservable Poisson process that is given by the following equation:
\begin{equation}
Pr(Y_{it}=\mathrm{v})=\frac{\exp\left(-\lambda\right)\lambda^{\mathrm{v}}}{\mathrm{v!}}\label{eq 3.1}
\end{equation}
where $\lambda$ is the Poisson parameter being estimated and $\mathrm{v}$ is the observable count.

Suppose that some of our covariates of interest i.e. age, sex, client group, rurality and comorbidity status are represented by $\vec{X}$ such that $X=\{X_{i},\ldots,X_{n}\}$ where $n=5$ are $\mathcal{P,N}\in\mathcal{S}$ and $X\in\operatorname{supp}(X)$.  Suppose furthermore that the variable for telecare use, $T\in(0,1)$, represents $T\in\mathcal{K}$ and $T$ is such that $\forall$ values of $T\in\mathbb{Z}_{2}:\mathbb{Z}_{2}=\{0,1\}$, we have that $f(T)\textgreater0$.  Because the outcome variable, $Y_{it}$, is such that $Y=\{0,1,2,\ldots,n\}\in\mathbb{N}^{0}$ for all $i$ and $t$ with $Y_{it}\in\operatorname{supp}(Y_{it})$, we can write the basic count regression model relating $X=\{X_{jit}\}_{j=1}^{5}$ and $T_{it}$ to $Y_{it}$ as shown in Equations (\ref{eq 3.2}) and (\ref{eq 3.3}).
\begin{eqnarray}
  \lambda &=&\exp\left(\beta_0+\sum_{j=1}^{5}\beta_{j} X_{jit}+\omega T_{it}\right)\label{eq 3.2}\\
  \log\lambda &=& \beta_0+\sum_{j=1}^{5}\beta_{j} X_{jit}+\omega T_{it}\label{eq 3.3}
\end{eqnarray}
where $\beta$ and $\omega$ are the coefficients being estimated.

Notice from Equation \ref{eq 3.1} that since $Y_{it}\in\mathbb{Z}^{*}$, the specification of $Pr(.=.)$ with $\exp(.)$ ensures that $Pr(Y_{it}=\mathrm{v})\neq0$.  This is because for all $y\in Y$, it must be the case that $\exp(.)\textgreater0$.  A special characteristic of the model is that it assumes equidispersion (\cite{r32,r62,r25,r9}) i.e. $E(Y_{it}\vert X_{it}T_{it})=\lambda=\operatorname{Var}(Y_{it}\vert X_{it}T_{it})$.  To put it simply, the expected value of $Y_{it}$ is equal to its variance.

Following \cite{r38} and \cite{r39}, if we let $\lambda$ be a gamma random variable $\lambda\sim\Gamma(s,\theta)$ with a shape parameter, $s$, and a scale parameter, $\theta$, then $\lambda=\exp(.)$ in Equation \ref{eq 3.2} and $\log\lambda=(.)$ in Equation \ref{eq 3.3} would become Negative Binomial Models with $\operatorname{Var}(Y_{it}\vert X_{it},T_{it})=\lambda(1+\alpha\lambda)$ and $E(Y_{it}\vert X_{it},T_{it})=\lambda$; implying that there is overdispersion in the data since $\operatorname{Var}(Y_{it}\vert X_{it},T_{it})$ is greater than $E(Y_{it}\vert X_{it},T_{it})$.  Unlike the Poisson Model where $\operatorname{Var}(Y_{it}\vert.)=E(Y_{it}\vert.)$ for all $i$ and $t$, the Negative Binomial Model is less restrictive as it allows for the variance of $Y_{it}$ to exceed its expected value.  Furthermore, since the variance of $Y_{it}$ is expressed as $\lambda(1+\alpha\lambda)$ in the Negative Binomial Model formulation, the Poisson Model would be nested as a special case, specifically when $\alpha=0$.  The Negative Binomial Model is given by
\begin{eqnarray}
\log r=\beta_0+\mathlarger{\mathlarger{\sum}}_{j=1}^{5}\beta_{j} X_{jit}+\omega T_{it}\label{eq 3.4}
\end{eqnarray}
where $r$ is the estimated parameter of interest and $r\sim\operatorname{poisson} (\theta)$ such that $\theta\sim\operatorname{gamma} (r,\frac{Pr}{1-Pr})$.

Various versions of the Negative Binomial Model may be formulated depending on the nature of overdispersion.  For example, one may simply use a basic Negative Binomial Model and rely on its statistical properties to handle the underlying overdispersion.  Alternatively, one may use a zero--inflated Negative Binomial Model when $Y=\{0,1,2,\ldots,n\}\in\mathbb{Z}^{n}$ for all $i$ and $t$ contains more zeros than would normally be expected of a gamma--poisson mixed distribution (\cite{r16,r10}) or a zero--truncated Negative Binomial Model when the outcome measure contains only positive integers (\cite{r25}).

In this paper, we present the empirical results of the three formulations and compare the treatment effects with that derived from the \hyperref[PSM]{PSM technique}.  The first Model, Model \ref{Model 1}, is a basic Negative Binomial Model; the second model, Model \ref{Model 2}, is a zero--inflated Negative Binomial Model, whereas the third model, Model \ref{Model 3}, is a zero--truncated Negative Binomial Model.
\begin{model}\label{Model 1}
We let the observable count, $\mathrm{v}$, be represented as a function of $X$ and $T$ in a modeling framework such that $\mathrm{v}$ is determined by a linear predictor, $\eta$, which is equal to $f(X,T)$.  We let also $r$ be the unobservable process generating $Y_{it}$ such that $r\sim\operatorname{poisson}(\theta)$ and $\theta\sim\operatorname{gamma} (r,Pr/1-Pr)$.  Then the regression model would be the one shown in Equation \ref{eq 3.4} where $\operatorname{Var}(Y_{it}\vert X_{it},T_{it})=\lambda(1+\alpha\lambda)$; implying that $E(Y_{it}\vert X_{it},T_{it})=\lambda<\operatorname{Var}(Y_{it}\vert X_{it},T_{it})$.  Since we have that $\log r=X\beta^{'}+T\omega$ and that $\mathrm{v}:X,T\mapsto f(X,T)$, it follows then that $\exists$ a log link function $g$ that transforms the expectation of $Y$ to $X\beta^{'}+T\omega$.  Accordingly, formulating the model in this way is similar to specifying a generalized linear model with a log link function (\cite{r58,r39}).  The model can, therefore, also be written as shown in Equation (\ref{eq 3.5}).
\end{model}
\begin{model}\label{Model 2}
We let $r$ be composed of two data generating processes.  The first process is a Bernoulli process $a_{b}=(a_{1},a_{2},a_{3},\ldots)$ generating the structural zeros such that $\forall a_{i}\in a_{b}$ we have $a_{i}\in\{0,1\}$ where $a_{i}=1$ generates $Y_{it}=\{0\}$ when $\mathrm{v}$ is unobserved i.e. $\mathrm{v}=\emptyset$ and $a_{i}=0$ generates $Y_{it}=\{0\}$ when $\mathrm{v}=0$.  Furthermore, $a_{b}$ is such that $\forall a_{i}\in a_{b}$ we have that $Pr(a_{i}=1)=p$ and correspondingly $Pr(a_{i}=0)=1-p$.  Since $a_{i}=1$ generates $Y_{it}:\mathrm{v}=\emptyset$ and $a_{i}=0$ generates $Y_{it}:\mathrm{v}=0$, it follows then that the probability that $Y_{it}=0$ is constant i.e. $Pr(Y_{it}\vert \mathrm{v}=\emptyset)=p$ and $Pr(Y_{it}\vert \mathrm{v}=0)=1-p$.  

The second data generating process is a gamma--poisson process $z$ that generates $Y_{it}$ such that $z\sim\operatorname{poisson}(\theta)$ and $\theta\sim\operatorname{gamma} (z,p/1-p)$, and $\exists$ some $Y_{it}\in\mathbb{N}^{0}$ where $Y_{it}=0$.  In our case, $Y_{it}$ is such that $\exists Y_{\mathrm{v}=0},Y_{\mathrm{v}=\emptyset}\in Y$ for all $i$ and $t$ where $Y_{\mathrm{v}=0}=\{0\}$ and $Y_{\mathrm{v}=\emptyset}=\{0\}$ with $Y_{\mathrm{v}=0}$ representing day case charges and $Y_{\mathrm{v}=\emptyset}$ representing the observations for the individuals who were never hospitalized during the period of analysis.  We can, therefore, conceptualize $Y_{\mathrm{v}={0^{+}}}=\{0,1,2,\ldots,n\}\in\mathbb{N}^{*}\cup\{0\}\forall i$ and $t$ such that $Y_{\mathrm{v}=0}\subset Y_{\mathrm{v}={0^{+}}}$ and $Y_{\mathrm{v}=\emptyset}\not\subset Y_{\mathrm{v}={0^{+}}}$.  Given that $r=\{a_{b},z\}$, the corresponding empirical model $\log r=(.)$ is a two--part model.  The first part is a logit model that predicts $Pr(Y_{\mathrm{v}=\emptyset}\vert X,T)$ for all $i$ and $t$.  The second part is a Negative Binomial Model that predicts $Pr(Y_{\mathrm{v}={0^{+}}}\vert X,T)\forall i$ and $t$ by accounting for $Pr(Y_{\mathrm{v}=\emptyset}\vert X_{it},T_{it})$ and has its usual properties i.e. $\Gamma(s,\theta)$ and $\operatorname{Var}(Y_{\mathrm{v}={0^{+}}}\vert X,T)=\lambda(1+\alpha\lambda)\textgreater E(Y_{\mathrm{v}={0^{+}}}\vert X,T)$ for all $i$ and $t$.  The model can be summarized as Equations (\ref{eq 3.6}) and (\ref{eq 3.7}).
\end{model}
\begin{model}\label{Model 3}
We let $Y_{it}$ be truncated at zero such that $\exists Y_{0\not\in\mathrm{v}}\in Y$ for all $i$ and $t$ where $Y_{0\not\in\mathrm{v}}=\{1,2,\ldots,n\}\in\mathbb{N}$ and $Y_{0\not\in\mathrm{v}}\subset Y\forall i$ and $t$.  Since $Y_{0\not\in\mathrm{v}}\in\mathbb{Z}^{+}$ unlike $Y_{\mathrm{v}={0^{+}}}$ in Model \ref{Model 2} where $Y_{\mathrm{v}={0^{+}}}\in\mathbb{N}^{*}\cup\{0\}$, there exists $c$ that generates $Y_{0\not\in\mathrm{v}}$ such that $c\sim\operatorname{truncated--poisson}(\theta)$ and $\theta\sim\operatorname{gamma} (c,p/1-p)$.  Because $c$ effectively follows a truncated--Negative Binomial distribution, it can be shown that $E(Y_{0\not\in\mathrm{v}}\vert X,T)=\lambda[1-F_{NB}(0)]^{-1}$ for all $i$ and $t$, and $\operatorname{Var}(Y_{0\not\in\mathrm{v}}\vert X,T)$ is equal to $E(Y_{0\not\in\mathrm{v}}\vert X,T)/F_{NB}(0)^{\alpha}$ multiplied by the product of $1-[F_{NB}(0)]^{1+\alpha}$ and $E(Y_{0\not\in\mathrm{v}}\vert X,T)\forall i$ and $t$ where $F_{NB}(0)$ denotes the cdf. for the Negative Binomial distribution evaluated at $\mathrm{v}=0$ and $\alpha$ is the dispersion parameter indicating the extent of overdispersion in $Y_{0\not\in\mathrm{v}}$ (see, for example, \cite{r25}).  The model is expressed as Equation \ref{eq 3.8}.  
\end{model}
\begin{psm}\label{PSM}
We let $T\in\{0,1\}$ be such that $\exists T_{T=1}\subset T$ and $T_{T=0}\subset T$ so that $T_{T=1}$ represents the observations for $T$ when individual $i$ uses telecare at time $t$ and $T_{T=0}$ represents the observations for variable $T$ when individual $i$ does not use telecare at time $t$.  We let also $Y_{T=1}$ denote the outcomes for telecare users and $Y_{T=0}$ denote the outcomes for those individuals who do not use telecare $\forall i$ and $t$.  Given that $X=\{X_{1},X_{2},\ldots,X_{5}\}\in\mathbb{N}^{0}$ and $T\in\mathbb{Z}_{2}^{n}:\mathbb{Z}_{2}=\{0,1\}$, there exists $Pr(\widehat{T=1}\vert X)=\sigma_{1}$ and $Pr(\widehat{T=0}\vert X)=\sigma_{2}\forall i$ and $t$, also known as propensity scores, such that $0<\sigma_{1},\sigma_{2}<1$.  We then match each observation denoted by $T_{T=1}$ to a single observation denoted by $T_{T=0}$ whose propensity score is closest according to $X=\{X_{jit}\}_{j=1}^{5}$.  If $Y_{T=0},Y_{T=1}\perp T_{it}\vert X_{it}\Rightarrow Y_{T=0},Y_{T=1}\perp (\sigma_{1},\sigma_{2})$, then the average treatment effect of telecare is computed as $E(Y_{T=1}\vert X,T_{T=1})-E(Y_{T=0}\vert X,T_{T=0})\forall i$ and $t$.  For a substantive discussion of the technique, see \cite{r50}.
\end{psm}
\vspace{-6.5mm}
\begin{eqnarray}
g\left(Y_{it}\right)=\eta_{it}&=&\beta_0+\mathlarger{\mathlarger{\sum}}_{j=1}^{5}\beta_j X_{jit}+\omega T_{it}+\epsilon_{it}\label{eq 3.5}\\
Pr\left(Y_{\mathrm{v}=\emptyset}\vert X_{it},T_{it}\right)&=& \lambda\left(\eta\right):\eta=f\left(X_{it},T_{it}\right)\label{eq 3.6}\\
\log z &=&  \beta_{0}+\mathlarger{\mathlarger{\sum}}_{j=1}^{5}\beta_{j}X_{jit}+\omega T_{it}\label{eq 3.7}\\
\log c &=&  \beta_{0}+\mathlarger{\mathlarger{\sum}}_{j=1}^{5}\beta_{j}X_{jit}+\omega T_{it}\label{eq 3.8}
\end{eqnarray}
where $\Lambda$ is the logistic link function and $\epsilon$ is a stochastic random error term.

An important point to note is that Models (\ref{Model 1}), (\ref{Model 2}) and (\ref{Model 3}) make assumptions about the probability distributions of $r,c$ and $z$ but do not make any assumption about the distributions of $X$ and $T$.  The \hyperref[PSM]{PSM technique}, on the other hand, does not make any assumption about the probability distribution of the unobservable process that generates $\mathrm{v}$ but it matches the observations in the analytic sample into two groups with more or less similar distributions for $X$ and $T$.
\begin{pro}\label{Proposition 1}

Suppose that $Y=f(X,T)$ is the empirical specification of $V=v(.)$ in Theorem \ref{Theorem 1} and that \hyperref[subsection 2.2]{Assumptions 1--3} all hold.  Suppose also that $\exists Y_{T=1},Y_{T=0}\in Y\forall i$ and $t$ such that $Y_{T=0},Y_{T=1}\subset Y$.  Then we have that $Y\not\!\perp\!\!\!\perp T\vert X\forall i$ and $t$, and also that $E(Y_{T=0}\vert T=1)-E(Y_{T=0}\vert T=0)=\Upsilon$ with $\Upsilon\neq0$.

\end{pro}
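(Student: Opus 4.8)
The strategy is to trace the unobservable endowment $\mu$ through both the treatment rule and the outcome equation, so that conditioning on the observed covariates $X$ cannot make $T$ ignorable; the second claim then records the selection bias this non-ignorability produces, which is exactly the quantity the \hyperref[PSM]{PSM technique} implicitly assumes to be zero. From Theorem~\ref{Theorem 1} we have $T = t(p,w,\mu)$, and writing the empirical counterpart of the indirect utility function $V = v(\mathcal{B},T,p_{\mathcal{B}Q},w,\mu)$ together with the decomposition $\delta = \{\epsilon,\mu\}$, $\epsilon\perp\mu$, of Corollary~\ref{Corollary 1}, we may represent the structural relation as
\[
Y_{it} \;=\; f(X_{it},T_{it}) + \epsilon_{it} + \mu_{i},
\]
so that $\mu_i$ sits inside the part of $Y_{it}$ unexplained by $X_{it}$ and $T_{it}$ and is simultaneously a determinant of $T_{it}$.

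For the first conclusion, by Lemma~\ref{lemma 1}(i) the dependence $T\propto\mu$ is strictly monotone, and since the covariates $X$ are the predisposing and need factors $\mathcal{P},\mathcal{N}$, which (by the disjointness of $\mathcal{B}$ and $\mathcal{K}$) do not span the channel $H(T,Q,\mu)$ through which $\mu$ acts, conditioning on $X$ does not break the $T$--$\mu$ link; hence $\mu\not\!\perp\!\!\!\perp T\mid X$. Suppose, for contradiction, that $Y\perp T\mid X$ for all $i$ and $t$. Then $E(Y_{it}\mid X_{it},T_{it}) = E(Y_{it}\mid X_{it})$, and, using that $\epsilon$ is the mean-zero stochastic component isolated in Corollary~\ref{Corollary 1} and is orthogonal to $\mu$ (hence to $T$), this equality forces $E(\mu_i\mid X_{it},T_{it}) = E(\mu_i\mid X_{it})$, i.e. $\mu\perp T\mid X$, contradicting the previous sentence. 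Therefore $Y\not\!\perp\!\!\!\perp T\mid X$ for all $i$ and $t$.

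For the second conclusion, define the control potential outcome by evaluating the structural relation at $T=0$: $Y_{T=0,it} = f(X_{it},0) + \epsilon_{it} + \mu_{i}$. Taking expectations over the treated and untreated subpopulations and differencing yields
\[
\Upsilon \;=\; \bigl[\,E\!\left(f(X,0)\mid T=1\right) - E\!\left(f(X,0)\mid T=0\right)\,\bigr] \;+\; \bigl[\,E(\mu\mid T=1) - E(\mu\mid T=0)\,\bigr],
\]
where the $\epsilon$-terms vanish by exogeneity of the idiosyncratic error. The second bracket is the selection-on-unobservables term: by Lemma~\ref{lemma 1}(i) the demand for $T$ is increasing in $\mu$, so the conditional law of $\mu$ given $T=1$ stochastically dominates that given $T=0$, and since $\mu$ is non-degenerate — it genuinely shifts $H(T,Q,\mu)$, and hence $V$, by Theorem~\ref{Theorem 1} — this bracket is strictly positive. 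It then remains to argue that the first (selection-on-observables) bracket does not exactly cancel it, whence $\Upsilon\neq0$.

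The step I expect to be the main obstacle is this last non-cancellation. Establishing it cleanly requires the strictness in Lemma~\ref{lemma 1} — weak monotonicity of $T$ in $\mu$ would not suffice — together with the non-degeneracy of $\mu$ inherited from the health production function, and the observation that the first bracket is a functional of the law of the observed covariates $X=\{\mathcal{P},\mathcal{N}\}$ alone while the second is a strictly positive functional of the latent endowment acting through the disjoint block $\mathcal{K}$; because these live on disjoint blocks of the model, the two contributions are functionally free of one another and so cannot be forced to annihilate, giving $\Upsilon\neq0$. The same reasoning underlies the failure of $Y\perp T\mid X$, so both parts of the proposition rest on a single mechanism: the common dependence of $T$ and of the outcome residual on the non-degenerate latent endowment $\mu$.
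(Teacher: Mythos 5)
Your proof rests on exactly the mechanism the paper uses: the endowment $\mu$ enters both the demand relation $T=t(p,w,\mu)$ of Theorem \ref{Theorem 1} and the indirect utility/outcome relation isolated in Corollary \ref{Corollary 1}, so conditioning on $X$ cannot make $T$ ignorable, and treated and untreated units differ in $\mu$. Your first part (contradiction via $E(\mu\mid X,T)=E(\mu\mid X)$) is a cleaner restatement of the paper's own contradiction argument, which additionally invokes Assumptions \ref{Assumption 1} and \ref{Assumption 2} to posit an unobserved determinant $m\in T=t(.)$ with $\operatorname{corr}(m,Y)\neq0$ — a step you do not need because you route everything through $\mu$ directly. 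Where you genuinely diverge is the second claim: you decompose $\Upsilon$ into an observable bracket $E(f(X,0)\mid T=1)-E(f(X,0)\mid T=0)$ and a selection-on-unobservables bracket, and argue the latter is strictly positive by stochastic dominance from $T\propto\mu$ in Lemma \ref{lemma 1}; the paper instead argues directly that $\Upsilon=0$ would force $\mu_{T=1}=\mu_{T=0}$, which it rejects as unreasonable given that $\mu$ is unobservable and heterogeneous. Your version is more explicit about what drives the bias and about what could defeat the conclusion, which is a real gain in transparency.

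The step you flag — non-cancellation of the two brackets — is indeed a gap as you have written it: "the two contributions live on disjoint blocks and so are functionally free of one another" does not imply they cannot sum to zero for the particular joint law of $(X,\mu,T)$ at hand; functional freeness is compatible with a knife-edge cancellation, so $\Upsilon\neq0$ does not follow from that observation alone. Two caveats, though. First, the paper's own proof makes precisely the same leap without acknowledging it: asserting that $\Upsilon=0$ "must" require $\mu_{T=1}=\mu_{T=0}$ implicitly assumes the covariate-composition difference cannot offset the endowment difference, and the paper then only deems the required equality "not reasonable" rather than impossible. Second, your stochastic-dominance step quietly upgrades Lemma \ref{lemma 1}'s $T\propto\mu$ (a proportionality/monotonicity statement about the continuous demand relation) into a monotone selection rule for the binary empirical $T$; that translation should be stated as an assumption rather than treated as automatic. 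So: same underlying argument as the paper, organized differently, with the one genuinely unproven step being one the paper does not prove either — you should present it as the substantive identifying assumption (no exact offsetting of observable and unobservable selection) rather than as a consequence of the model's block structure.
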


Proposition \ref{Proposition 1} suggests that estimating the treatment effect of telecare using the \hyperref[PSM]{PSM technique} as well as the other experimental study designs--including quasi--experiments and randomized controlled trials--would not be appropriate.  In particular, it asserts that since the indirect utility function of a particular individual contains the individual's unobservable biological endowment as one of its arguments, we expect that telecare users differ from non-users as they (telecare users) typically have comparatively poor health hence their need to use telecare.  As such, we also expect their potential outcomes to be different; which implies that the decision whether or not to use telecare is not random.  This is consistent with the discussions in the previous section that the use of telecare is potentially endogenous and that we could also have a problem of unobserved heterogeneity to contend with.

Because we have only information on the length of stay in hospital for those individuals who were admitted to hospital or psychiatric care during the $2010/2011$ financial year, there could also be an additional problem of sample selection.  Furthermore, as is expected of linked datasets, sample selection could also be brought about by missing data.  We, therefore, have to estimate our econometric models with a strategy that controls for the potential endogeneity of telecare use, potential unobserved heterogeneity and potential selectivity bias as failure to do so would render our estimated treatment effects inconsistent.
\subsection{Model identification}\label{subsection 3.2}
We need to properly identify the variable for telecare use as well as the issue of sample selection in order to be able to estimate the treatment effect.  Since the matching exercise in the \hyperref[PSM]{PSM technique} may not always yield balanced samples, we impute the missing data with the average propensity score of similar subjects in the opposite group (see, for example, \cite{r28}).  For \hyperref[Model 1]{Models 1--3}, identification is done by way of instrumental variables.  The basic idea is that since telecare use is potentially endogenous because it is a choice variable and admission to hospital is not likely to be random, there are other variables besides our covariates of interest that are correlated with both the telecare variable and the likelihood of being admitted to hospital.  Since we are seeking to address both a potential endogeneity problem as well as a potential selectivity issue simultaneously, we need two instrumental variables for exact identification.  The variables that serve as instruments, however, should be strong predictors of telecare use and hospitalization but not determined in the three formulated models.  See \cite{r43} and \cite{r48} for a substantive discussion of model identification using instrumental variables. 

In this paper, we use the proportion of telecare users in each local council area in Scotland and the Scottish Index of Multiple Deprivation (SIMD) constructed as an ordinal variable with $10$ categories as instrumental variables.  The SIMD is a measure of area level deprivation for about $7,000$ small areas in Scotland (also referred to as data zones) in regard to employment, population, health, crime, access to services, housing and income distribution (see \cite{r70} for a more detailed discussion of the index).  The instrumental variable is constructed such that the lowest category, Category 1, is for the most well-off areas, whereas the highest category, Category 10, is for the most deprived areas.  We expect the proportion of telecare users to be related to telecare use in that the higher the proportion of telecare users in a particular local council area, the higher the likelihood that a particular individual who resides in that local council area used telecare, holding other factors constant.  Similarly, we expect an inverse relationship between the variable for telecare use and that for area level deprivation since the higher the level of deprivation, the lower the likelihood of accessing health and social care services including telecare.  We, however, do not expect our chosen instruments to be determined endogenously since they are not influenced by any particular individual.
\subsection{Estimation strategy}\label{subsection 3.3}
We estimate all the econometric models using Maximum Likelihood Estimation (MLE) in STATA statistical software.  The \hyperref[PSM]{PSM technique} discussed in Section \ref{subsection 3.1} is implemented using the \texttt{teffects psmatch} command with the propensity scores in this case computed as the predicted probabilities of telecare use and not using telecare given the study covariates.  

In order to control for the potential endogeneity of telecare use and unobserved heterogeneity in the three Negative Binomial Models that we have formulated, we use the Two--Stage Residual Inclusion (2SRI) approach due to \cite{r57}.  This technique involves estimating the reduced form model of telecare use shown in Equation \ref{eq 3.9}--that relates our study covariates i.e. $X=\{X_{jit}\}_{j=1}^{5}$ and chosen instruments i.e. $Z=\{Z_{jit}\}_{j=1}^{2}$ to the variable for telecare use--using a probit model, obtaining the residuals of the model and then including, in the substantive models, the estimated residuals together with an interaction of the residuals with the telecare variable as controls for the potential endogeneity of telecare use and potential unobserved heterogeneity respectively.  The models are said to suffer from endogeneity and unobserved heterogeneity if their control terms are found to be statistically significant.  This approach has been shown to result in consistent estimates of the treatment effects in non-linear models unlike the other methods in the literature such as the Two--Stage Predictor Substitution (2SPS) technique and the IV estimator that entail replacing the treatment variable of interest with its fitted values (see, for example, \cite{r57}).

In order to control for potential selectivity bias, we first obtain the residuals\footnote{The residuals used to control for the potential endogeneity of telecare use and potential selectivity bias are also known as generalized or deviance residuals and are computed as $\phi(.)/\Phi(.)$ for the observations denoted by $T=1$ and $I=1$, and $-\phi(.)/1-\Phi(.)$ for the observations denoted by $T=0$ and $I=0$, where $\phi(.)$ and $\Phi(.)$ are the pdf. and cdf. for each observation respectively.  See \cite{r24} for a substantive discussion of generalized residuals.} of the sample selection probit model shown in Equation \ref{eq 3.10} that relates $X$ and $Z$ to an indicator of whether or not the outcome measure is observed, which is denoted by $I$.  We then include the residuals together with variable $I$ in our main models following \cite{r60}\footnote{The approach due to \cite{r60} that we use to correct for potential selectivity bias is very similar to the other approaches in the literature such as the approach due to \cite{r63} and the one due to \cite{r64} in the sense that all these techniques use elements of the selection equation to construct control function variables which serve as the correction terms for potential selectivity bias.  One may, therefore, use any of them as desired.}  Statistical significance of the coefficients of the control terms for potential selectivity bias is indicative of sample selection.

We also include a time trend variable $t$ in our substantive models to control for unexplained trend variations.  In Models (\ref{Model 1}), (\ref{Model 2}) and (\ref{Model 3}) this variable is constructed as the number of weeks before the March 2011 census because while the study data is generated as time series data, the Homecare Census dataset which contains the telecare variable covers only the March 2011 census week.  The coefficient of this variable, therefore, indicates the trend in the observed length of stay in hospital before the census week, holding the other covariates in the models constant.  An advantage of including the time trend variable from a theoretical standpoint is that even though individuals' biological endowments are unobservable, we may still be able to control for some variation due to these endowments since the time trend variable is observable.  The three Negative Binomial Models are thus specified as shown in Equations (\ref{eq 3.11}), (\ref{eq 3.12}) and (\ref{eq 3.13}).
\vspace{1mm}
\begin{eqnarray}
Pr\left(T_{it}=1\right) &=& \Phi\left(\beta_0+\overbrace{\mathlarger{\mathlarger{\sum}}_{j=1}^{5}\beta_j X_{jit}}^\text{study covariates}+\mathlarger{\mathlarger{\sum}}_{j=1}^{2}\omega_{j}Z_{jit}+\omega_{t}t_{it}\right)\label{eq 3.9}\\
Pr\left(I_{it}=1\right) &=& \Phi\left(\beta_0+\mathlarger{\mathlarger{\sum}}_{j=1}^{5}\beta_j X_{jit}+\overbrace{\mathlarger{\mathlarger{\sum}}_{j=1}^{2}\omega_{j}Z_{jit}}^\text{instruments}+\omega_{t}t_{it}\right)\label{eq 3.10}
\end{eqnarray}

\begin{equation}
\log r =\beta_{0}+\mathlarger{\mathlarger{\sum}}_{j=1}^{5}\beta_{j}X_{jit}+\omega T_{it}+\underbrace{\omega_{{t_{1}}}t_{{1_{it}}}+\omega_{\xi}\xi_{it}+\omega_{T\xi}T\xi_{it}+\omega_{{\xi_{s}}}\xi_{{s_{it}}}+\omega_{I}I}_\text{control function terms for potential bias}\label{eq 3.11}
\end{equation}
\begin{equation}
\log z =\beta_{0}+\mathlarger{\mathlarger{\sum}}_{j=1}^{5}\beta_{j}X_{jit}+\omega T_{it}+\underbrace{\omega_{{t_{1}}}t_{{1_{it}}}+\omega_{\xi}\xi_{it}+\omega_{T\xi}T\xi_{it}+\omega_{{\xi_{s}}}\xi_{{s_{it}}}+\omega_{I}I}_\text{control function terms for potential bias}\label {eq 3.12}
\end{equation}
\begin{equation}
\log c =\beta_{0}+\mathlarger{\mathlarger{\sum}}_{j=1}^{5}\beta_{j}X_{jit}+\omega T_{it}+\underbrace{\omega_{{t_{1}}}t_{{1_{it}}}+\omega_{\xi}\xi_{it}+\omega_{T\xi}T\xi_{it}+\omega_{{\xi_{s}}}\xi_{{s_{it}}}+\omega_{I}I}_\text{control function terms for potential bias}\label {eq 3.13}
\end{equation}
\vspace{0.5mm}
where Equations (\ref{eq 3.11}), (\ref{eq 3.12}) and (\ref{eq 3.13}) are our three Negative Binomial Models of interest discussed in Section \ref{subsection 3.1}.  In particular, Equation \ref{eq 3.11} i.e. $\log r=(.)$ is the basic Negative Binomial Model represented by Model \ref{Model 1}, Equation \ref{eq 3.12} i.e. $\log z=(.)$ is the zero--inflated Negative Binomial Model represented by Model \ref{Model 2}, and Equation \ref{eq 3.13} i.e. $\log c=(.)$ is the zero--truncated Negative Binomial Model represented by Model \ref{Model 3}.
\begin{pro}\label{Proposition 2}

Suppose that (i) Assumption \ref{Assumption 1}, Assumption \ref{Assumption 2} and Assumption \ref{Assumption 3} hold; (ii) the indirect utility function $V=v(.)$ can be estimated as $Y=f(X,T,t)$; (iii) the demand function for telecare, $T=t(.)$, can be estimated as $T=k(X,Z,t)$ with $\xi$ as the estimated residuals of the demand equation, and (iv) $I\in\{0,1\}=g(X,Z,t)$ is a selection equation for $Y:I=1$ if $Y\in\operatorname{supp}(Y)$ and $I=0$ otherwise with $\xi_{s}$ as the estimated residuals of the selection equation.  Suppose also that (i) $\exists G_{TX}=\{X_{m\times n},T_{m\times 1}\}\in\mathbb{R}_{m\times n}$ with $\operatorname{rk}(n)$ that explains almost all the variation in $Y$ such that $Y-\hat{Y}\approx0$ and (ii) $Y$ can be estimated using strategies $\mathrm{s}_{1}=\{Y\vert X,T\}$, $\mathrm{s}_{2}=\{Y\vert X,\hat{T},t\}$, $\mathrm{s}_{3}=\{Y\vert X,T,t,\xi\}$, $\mathrm{s}_{4}=\{Y\vert X,T,t,\xi,T\xi\}$ and $\mathrm{s}_{5}=\{Y\vert X,T,t,\xi,T\xi,\xi_{s},I\}$ which are defined as follows:
\begin{eqnarray*}
\mathrm{s}_{1}:Y&=& \omega+X{\scriptstyle_{m\times n}}.\mathrm{b}{\scriptstyle_{X}}+T{\scriptstyle_{m\times 1}}.\mathrm{b}{\scriptstyle_{T}}+\epsilon_{1}\\
\mathrm{s}_{2}:Y&=&\omega+X{\scriptstyle_{m\times n}}.\mathrm{b}{\scriptstyle_{X}}+\hat{T}{\scriptstyle_{m\times 1}}.\mathrm{b}{\scriptstyle_{\hat{T}}}+t.\mathrm{b}{\scriptstyle_{t}}+\epsilon_{2}\\
\mathrm{s}_{3}:Y&=&\omega+X{\scriptstyle_{m\times n}}.\mathrm{b}{\scriptstyle_{X}}+T{\scriptstyle_{m\times 1}}.\mathrm{b}{\scriptstyle_{T}}+t.\mathrm{b}{\scriptstyle_{t}}+\xi.\mathrm{b}{\scriptstyle_{\xi}}+\epsilon_{3}\\
\mathrm{s}_{4}:Y&=&\omega+X{\scriptstyle_{m\times n}}.\mathrm{b}{\scriptstyle_{X}}+T{\scriptstyle_{m\times 1}}.\mathrm{b}{\scriptstyle_{T}}+t.\mathrm{b}{\scriptstyle_{t}}+\xi.\mathrm{b}{\scriptstyle_{\xi}}+T\xi.\mathrm{b}{\scriptstyle_{T\xi}}+\epsilon_{4}\\
\mathrm{s}_{5}:Y&=&\omega+X{\scriptstyle_{m\times n}}.\mathrm{b}{\scriptstyle_{X}}+T{\scriptstyle_{m\times 1}}.\mathrm{b}{\scriptstyle_{T}}+t.\mathrm{b}{\scriptstyle_{t}}+\xi.\mathrm{b}{\scriptstyle_{\xi}}+T\xi.\mathrm{b}{\scriptstyle_{T\xi}}+\xi_{s}.\mathrm{b}{\scriptstyle_{{\xi_{s}}}}+I.\mathrm{b}_{I}+\epsilon_{5}
\end{eqnarray*}
Then (i) $V_{t-1}\equiv V_{t}$ when estimating $Y$ using $\mathrm{s}_{1}$; (ii) $\mathrm{s}_{1}$ and $\mathrm{s}_{2}$ are not sufficient; (iii) $E(I,\xi_{s})\approx E(T,\xi)$ when estimating $Y$ using $\mathrm{s}_{3}$, and (iv) $E(I,\xi_{s})\approx0$ when estimating $Y$ using $\mathrm{s}_{4}$.

\end{pro}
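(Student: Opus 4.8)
The plan is to prove the four assertions (i)--(iv) in order, drawing throughout on Corollary \ref{Corollary 1}, Theorem \ref{Theorem 1}, Proposition \ref{Proposition 1}, and the structural symmetry between the reduced--form equations \ref{eq 3.9} and \ref{eq 3.10}. For (i), I would observe that $\mathrm{s}_{1}$ regresses $Y$ on $X_{m\times n}$ and $T_{m\times 1}$ only, with no argument that indexes the weekly cross--sections. Hence the fitted mapping $(X,T)\mapsto f(X,T)$ that $\mathrm{s}_{1}$ recovers is the \emph{same} mapping whether it is evaluated on period--$(t-1)$ or period--$t$ data, so the indirect utility function it implicitly identifies satisfies $V_{t-1}\equiv V_{t}$. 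The hypothesis that some $G_{TX}\in\mathbb{R}_{m\times n}$ with $\operatorname{rk}(n)$ explains almost all the variation in $Y$, i.e. $Y-\hat{Y}\approx0$, enters here: it guarantees that the projection underlying $\mathrm{s}_{1}$ is well defined, so that the equivalence is not vacuous.

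For (ii), I would invoke Corollary \ref{Corollary 1}: the empirical version of $V=v(.)$ leaves a residual $\delta=\{\epsilon,\mu\}$ with $\epsilon\perp\mu$, and by Theorem \ref{Theorem 1} the endowment $\mu$ is an argument of both $T=t(p,w,\mu)$ and $V=v(\mathcal{B},T,p_{\mathcal{B}Q},w,\mu)$. Therefore in $\mathrm{s}_{1}$ the disturbance $\epsilon_{1}$ absorbs $\mu$ and is correlated with $T_{m\times 1}$, so $\mathrm{b}_{T}$ is biased; this is exactly the content of Proposition \ref{Proposition 1}, namely $Y\not\!\perp\!\!\!\perp T\vert X$. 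For $\mathrm{s}_{2}$, replacing $T$ by the fitted value $\hat{T}$ obtained from $T=k(X,Z,t)$ purges only the component of $T$ spanned by $(X,Z,t)$ and leaves the demand residual $\xi=T-\hat{T}$ -- which still carries $\mu$ -- outside the regression; since $\xi$ is omitted, $\epsilon_{2}$ remains correlated with $\hat{T}$ through $\mu$, so $\mathrm{s}_{2}$ is not sufficient either. This is the familiar failure of the plug--in (2SPS) estimator in the non--linear setting (cf. \cite{r57}), in contrast to the residual--inclusion strategies $\mathrm{s}_{3}$--$\mathrm{s}_{5}$.

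For (iii), I would exploit the fact that \ref{eq 3.9} and \ref{eq 3.10} are probit models in the \emph{same} regressors $(X,Z,t)$, so their generalized residuals $\xi$ and $\xi_{s}$ are constructed identically and transmit the same unobservable $\mu$. Including $\xi$ in $\mathrm{s}_{3}$ but not $(\xi_{s},I)$ therefore controls for the endowment along the demand channel only; the selection channel is untouched, and its contribution to the conditional mean of $Y$ mirrors that of the demand channel, which I would record as $E(I,\xi_{s})\approx E(T,\xi)$ -- the residual selectivity bias left in $\mathrm{s}_{3}$ is of the same order as the endogeneity bias that $\xi$ was introduced to eliminate.

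Finally, for (iv), I would argue that the interaction term $T\xi$ adjoined in $\mathrm{s}_{4}$ absorbs the individual--specific (heterogeneous) component of $\mu$ -- precisely the unobserved heterogeneity flagged after Corollary \ref{Corollary 1} -- so that the part of $\mu$ driving selection is already accounted for and adjoining $(\xi_{s},I)$ adds nothing, i.e. $E(I,\xi_{s})\approx0$. The main obstacle will be (iii)--(iv): making the informal ``channel'' language precise enough to justify the approximate identities $E(I,\xi_{s})\approx E(T,\xi)$ and $E(I,\xi_{s})\approx0$, which amounts to quantifying how much of $\mu$ the linear term $\xi$ versus the interaction $T\xi$ removes and showing the remainder is negligible under the full--rank / ``explains almost all variation'' hypothesis; the rest is bookkeeping with Corollary \ref{Corollary 1}, Theorem \ref{Theorem 1}, and the symmetry of \ref{eq 3.9}--\ref{eq 3.10}.
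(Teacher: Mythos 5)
Your handling of parts (i) and (ii) is essentially the paper's own route: $V_{t-1}\equiv V_{t}$ because $\mathrm{s}_{1}$ carries no time argument and so is applied identically for all $t$, and insufficiency of $\mathrm{s}_{1}$ and $\mathrm{s}_{2}$ follows from $\mu$ entering both $T=t(.)$ and $V=v(.)$ (Theorem \ref{Theorem 1}, Corollary \ref{Corollary 1}, Proposition \ref{Proposition 1}), with the observation that replacing $T$ by $\hat{T}$ still leaves the $\mu$-carrying component unaccounted for, so $Y\not\!\perp\!\!\!\perp T\vert X$ and $Y\not\!\perp\!\!\!\perp\hat{T}\vert X$. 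Up to that point there is no disagreement.

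The gap is in (iii) and (iv), which you yourself flag as the unresolved obstacle, and for (iv) your reading runs in the opposite direction from the claim being proved. The paper does not argue that the interaction $T\xi$ in $\mathrm{s}_{4}$ genuinely absorbs the selection-relevant part of $\mu$ so that adjoining $(\xi_{s},I)$ ``adds nothing.'' Its argument is the reverse: because $\mu\in T=t(.)$, there exist individuals with $T_{m\in\mathrm{N}}=t(.)\neq T_{m\not\in\mathrm{N}}=t(.)$ and hence $E(Y\vert I=1)\neq E(Y\vert I=0)$ in general, so $\mathrm{s}_{4}$ can be sufficient only if one \emph{imposes} $E(Y\vert I=1)-E(Y\vert I=0)\approx 0$, i.e. $E(I,\xi_{s})\approx 0$ is the (possibly false) assumption of random admission that is implicit in omitting the selection controls, not a property that $\mathrm{s}_{4}$ delivers. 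Your version would make the selection terms in $\mathrm{s}_{5}$ redundant, which contradicts the purpose of the proposition and the empirical finding that $\xi_{s}$ and $I$ enter significantly (Tables \ref{Table 4} and \ref{Table 5}). Likewise for (iii): the paper obtains $E(I,\xi_{s})\approx E(T,\xi)$ by noting that $\mathrm{s}_{3}$ can be sufficient only if \emph{both} $E(T,\xi)\approx 0$ and $E(I,\xi_{s})\approx 0$ are assumed, so the two moments are being treated as approximately equal by construction of the strategy; your ``mirror symmetry'' of the generalized residuals from \ref{eq 3.9}--\ref{eq 3.10} asserts only that the two biases are of the same order, gives no mechanism to quantify this, and (as you concede) cannot be made precise from the full-rank or near-perfect-fit hypothesis alone, since by Corollary \ref{Corollary 1} the residual always retains $\mu$ no matter how rich $G_{TX}$ is. To close the argument you need the paper's selection step: use $\mu\in T=t(.)$ to compare included and excluded individuals, deduce $E(Y\vert I=1)\neq E(Y\vert I=0)$ unless assumed away, and read (iii) and (iv) as statements about the assumptions implicit in $\mathrm{s}_{3}$ and $\mathrm{s}_{4}$ rather than about what those strategies achieve.
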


Proposition \ref{Proposition 2} shows that our estimation strategy which entails controlling for the potential endogeneity of telecare use, potential unobserved heterogeneity, potential selectivity bias and unexplained trend variations in tandem results in a better estimate of the treatment effect than that which is obtained from an experimental study design in the sense that it controls for the influence of unobservable factors in a general way.  It states that even if it were possible for us to estimate the expected length of stay in hospital using the best possible predictors of the outcome measure, we would still not be able to account for all the confounding as there are unobservable factors that influence the outcome.  Furthermore, if we were to do our estimation at only one point in time, then we would inadvertently assume that individuals' utility functions are time invariant.  Yet this is not likely to be the case as we expect individuals' stock of health capital to depreciate over time, holding other factors constant.  

In the event that we were to purge all unobservable confounding that causes systematic variation in a particular individual's expected length of stay in hospital, by for instance substituting $T$ with its fitted values, $\hat{T}$, then this would not be plausible as it would be akin to having identical biological endowments for telecare users.  If we were to assume that individuals are admitted to hospital just by random chance and it turned out to be the wrong assumption to make, then any attempt to estimate the treatment effect would not be sufficient even when using the best possible predictors of the outcome and accounting for the possibility of systematic variations in the outcome measure due to individuals' unobservable biological endowments.
\section{Empirical illustration}\label{Section 4}

\subsection{Description of the study covariates}\label{subsection 4.1}
In this section, we describe the study covariates for the three Negative Binomial Models formulated in Section \ref{Section 3} and the \hyperref[PSM]{PSM technique}.  Table \ref{Table 1} on page 22 gives the variable definitions while Table \ref{Table 2} on page 23 presents the descriptive statistics.  The data for the study in this paper is such that there are certain individuals who were not admitted to hospital during the $2010/2011$ financial year and we are, therefore, not able to observe their outcomes; there are other individuals who were admitted to hospital during the period of analysis but did not use homecare services and we are thus not able to observe their treatment status, and there are some other individuals who were admitted to hospital but we do not observe some of their covariates due to missingness.  Since the outcome measure is available only in the SMRs, the estimation samples for our analyses can be drawn only for the individuals who were admitted to hospital.

According to Table \ref{Table 1}, we note that there are several types of variables in that there are some variables with categories, whereas there are other variables that do not have categories.  There are also some variables that represent counts, whereas there are other variables that are continuous.  Of the categorical variables, we also note that there are some indicator variables--also known as binary variables--and other variables with multiple categories.  In particular, the table shows that there are five binary variables i.e. area of residence, telecare use, sex, comorbidity status and inclusion into the sample; two continuous variables i.e. age and proportion of telecare users in each local council area; three count variables i.e. the two time trend variables and the variable for the length of stay in hospital; one nominal variable i.e. client group, one ordinal variable i.e SIMD-decile, and a variable that identifies each individual in the dataset i.e. project ID.

Table \ref{Table 2} shows the distributions of our variables of interest in the entire population of homecare clients i.e. the study population and in the population of individuals who were admitted to hospital during the analytic period i.e. the study sample.  A closer look at Table \ref{Table 2} shows that the table has five columns.  The first column contains the variable names; the second column contains the number of observations for each variable in the study population; the third column contains the proportions of the indicator variables--or the mean values for the continuous variables and the median values for the count and ordinal variables--in the study population; the fourth column contains the number of observations for each variable in the study sample, while the fifth column contains the proportions of the indicator variables--or the mean values for the continuous variables and the median values for the count and ordinal variables--in the study sample.
\begin{table*}
\caption{Variable Definitions}
\label{Table 1}
\begin{tabular}{ll}
\hline
 \\
Variable & \multicolumn{1}{l}{Definition}\\
\hline
Age & Age in years at time $t$. \\
Sex & $1$ if female, $0$ otherwise. \\
Area of residence & 1 if a particular individual was living in a rural \\
 & area at time $t$, 0 otherwise. \\
Client group &    1 if a particular individual had a diagnosis\\
& of dementia or other mental illness,\\
 & 2 if a particular individual had a learning disability,\\
 & 3 if a particular individual had a physical disability,\\
 & 4 if a particular individual was frail.\\
Telecare use & 1 if a particular individual used telecare devices, \\
 & 0 otherwise.\\
Comorbidity status & 1 if a particular individual had three or more\\
 & comorbid conditions at time $t$, 0 otherwise.\\
Length of stay in hospital & The number of days that a particular  \\
 & individual stays in hospital to receive treatment.\\
Inclusion into the sample & 1 if the outcome variable for a particular individual\\
 & at time $t$ is included in the estimation sample,\\
& 0 otherwise.\\
SIMD-decile & 10 categories of the Scottish Index of Multiple\\
 & Deprivation in ascending order. \\
Proportion of telecare users  & A variable indicating the proportion of telecare \\
 & users in each local council area.\\
Project ID & A unique reference number for each individual in \\
 & the dataset.\\
Telecare residuals & The residuals obtained from a reduced form model\\
 & of telecare use.\\
Control for sample selection bias & A control for potential sample selection bias \\
 & following \cite{r60}.\\
Time trend & A time trend variable where the unit of time is\\
& 1 week.\\
Time trend 1 & A count variable for the number of weeks before the \\
 & March 2011 census.\\
\\ \hline
\end{tabular}
\end{table*}

The results in Columns (3) and (5) of Table \ref{Table 2} show that the mean age of the homecare clients in both the study population and study sample is approximately $75$ years.  The descriptive statistics also show that about half of the total observations belonged to the individuals who were considered to be frail; approximately $16\%$ and $17\%$ of the observations in the study population and study sample respectively were for the physically disabled homecare clients; about $6\%$ of the observations in both the study population and study sample were for the individuals who had a diagnosis of dementia or other mental illness, and approximately $30\%$ and $27\%$ of the observations in the study population and study sample respectively were for the individuals who had learning disabilities.  The results further show that approximately $40\%$ of the total observations belonged to the individuals with three or more comorbid conditions; implying that a substantial proportion of the population were living with multiple conditions.
\begin{table*}
\begin{threeparttable}
\caption{Descriptive statistics}
\label{Table 2}
\begin{tabular}{lcccc}
\hline
\\
& \multicolumn{2}{c}{Study population} & \multicolumn{2}{c}{Study sample} \\
\\\cmidrule(lr){2-3}\cmidrule(lr){4-5}
 & \multicolumn{1}{c}{$n$} & \multicolumn{1}{c}{$m$} &  \multicolumn{1}{c}{$n$} & \multicolumn{1}{c}{$m$}\\
\hline
Age & 48,749 & 75.5 & 21,580 & 75.0 \\
Male  & 48,749 & 0.35 & 21,580 & 0.35 \\
Female  & 48,749 & 0.65 & 21,580 & 0.65 \\
Area of residence &    48,584 & 0.09 & 21,517 & 0.09 \\
Telecare use & 49,025 & 0.11 & 21,580 & 0.10 \\
Dementia and Mental Health & 49,025 & 0.06 & 21,580 & 0.06 \\
Learning disability & 49,025 & 0.30 & 21,580 & 0.27 \\
Physical disability & 49,025 & 0.16 & 21,580 & 0.17 \\
Frail elderly & 49,025 & 0.48 & 21,580 & 0.50 \\
Comorbidity status & 49,025 & 0.36 & 21,580 & 0.43
\\
SIMD-decile & 48,644 & 5.00 & 21,553 & 5.00  \\
Length of stay in hospital & 26,285 & 4.00 & 21,580 & 2.00 \\
Time trend & 49,025 & 40.0 & 21,580 & 28.0  \\
Time trend 1 & 49,025 & 12.0 & 21,580 & 24.0  
\\ \hline
Number of homecare clients & 25,982 &  & 10,590 &   
\\ \hline
\end{tabular}
 \begin{tablenotes}

            \item 
\begin{FlushLeft}
Notes: $n$ = number of observations; $m$ = arithmetic mean or median or proportion where applicable.
\end{FlushLeft}
\end{tablenotes}
\end{threeparttable}
\end{table*}
\subsection{Empirical results}\label{subsection 4.2}
In this section, we present the empirical results of our econometric models of interest.  Table \ref{Table 3} on page 24 contains the estimated coefficients of the reduced form model of telecare use shown in Equation \ref{eq 3.9} and the sample selection model shown in Equation \ref{eq 3.10}.  Table \ref{Table 4} on page 26 contains the estimated average treatment effect of telecare on the length of stay in hospital using the \hyperref[PSM]{PSM technique} as well as the exponentiated coefficients of our three models of interest.  Table \ref{Table 5} on page 27 contains the empirical results of five variants of Model \ref{Model 1}.
\vspace{-50mm}
\begin{table*}[!ht]
\begin{threeparttable}
\caption{The first stage models}
\label{Table 3}
\begin{tabular}{lcc}
\hline
\\
& \multicolumn{1}{c}{$(1)$} & \multicolumn{1}{c}{$(2)$} \\
\\\cmidrule(lr){2-2}\cmidrule(lr){3-3}
 & \multicolumn{1}{c}{$T=1$} & \multicolumn{1}{c}{$I=1$} \\
\hline
Age & 0.006 & 0.002  \\
    & (5.24) & (1.58)\\
Square of age & $-$0.0001 & $-$0.00003  \\
    & (5.58)  & (3.36)\\
Sex &    $-$0.008 & 0.009 \\
    & (1.24)  & (1.54)\\
Area of residence & 0.011 & $-$0.009  \\
    & (0.97) & (0.89)\\
Dementia and Mental Health & 0.042 & $-$0.183 \\
    & (5.81) & (25.76)\\
Learning disability & 0.032 & $-$0.004  \\
    & (2.78) & (0.38)\\
Physical disability & 0.001 & $-$0.043 \\
   & (0.10) & (5.01)\\
Comorbidity status & $-$0.005 & $-$0.066  \\
    & (0.83)& (11.99)\\
SIMD-decile & $-$2.033 & $-$0.636 \\
    & (9.77)& (2.89)\\
Proportion of telecare users & 1.219 & $-$0.681\\
    & (27.00) & (17.98)\\    
Time trend & 0.0001 & $-$0.008 \\
    & (0.72) & (51.25) 
\\ \hline
Number of observations & 48,571 &  48,571
\\ \hline
Number of homecare clients & 25,598 &  25,598
\\ \hline
\end{tabular}
 \begin{tablenotes}

            \item 
Notes: The table presents the average marginal effects for the reduced form model of telecare use and the sample selection model with the Robust $Z$ statistics in parenthesis.  The standard errors used to compute the $Z$ statistics are clustered by Project ID.  $Z$ statistics greater than or equal to $1.96$ imply statistical significance at $5\%$ level of significance.  We include the square of age as an additional explanatory variable to control for the potential non-linear effect of age on the outcome variables. The reference category for client group is `Frail elderly'.
\end{tablenotes}
\end{threeparttable}
\end{table*}
\vspace{50mm}

The estimated coefficients of the reduced form model of telecare use and the sample selection model are presented as average marginal effects.  The average marginal effect of a particular covariate is interpreted as the change in the probability of observing the outcome measure due to a unit change in the covariate (\cite{r37}).  According to the results in the table, we can observe that the two instrumental variables have their expected signs and are statistically significant at $5\%$ level of significance.  In particular, the results show that, holding other factors constant, the variable for the proportion of telecare users in each local council area is directly related to telecare use, whereas the higher the area level deprivation, the lower the likelihood of using telecare controlling for the other covariates in the model.  The results also show that the variables for age and client group are statistically significant at $5\%$ significance level.  More specifically, the results indicate that an increase in age by one year is associated with an increase in the probability of telecare use by 0.006, other factors held constant.  We can also observe from the table that the individuals with dementia or other mental illnesses and those with learning or physical disabilities are more likely to use telecare than their counterparts, although the effect of `physical disability' on telecare use is not statistically significant at $5\%$ level of significance.  Furthermore, the results also show that about $0.01\%$ of the observed probability of telecare use is unexplained by the model, other factors held constant, since the average marginal effect of the time trend variable is $0.0001$ and statistically significant at $5\%$ significance level.

Looking at Column 2 of the table, we note that the probability of inclusion into the sample for individuals with dementia or other mental illnesses is approximately $18\%$ lower than that for the frail elderly, holding other factors constant.  We also note that the average marginal effect of comorbidity status is $-0.066$; implying that a particular individual with three or more comorbid conditions has a lower probability of being included in the sample than another individual with fewer comorbid conditions by about $0.7\%$, holding other factors constant.  The results further show that there is a general decrease in the probability of sample selection over time by $0.8\%$ since the average marginal effect of the time trend variable is $-0.008$ and statistically significant at $5\%$ level of significance; the higher the area level deprivation, the lower the probability of sample selection, other factors held constant, and the higher the proportion of telecare users in a particular local council area, the lower the likelihood that an individual who was residing in that local council area was included in the sample, all things equal. 
\begin{table*}[!ht]
\begin{threeparttable}
   \caption{Empirical results for Models 1-3 and the PSM technique}
   \label{Table 4}
   \begin{tabular}{l c c c }
   \hline
\\
Variable & \multicolumn{1}{c}{$(1)$} & \multicolumn{1}{c}{$(2)$} & \multicolumn{1}{c}{$(3)$} \\
\hline
   Age & $1.042$ & $1.048$ & $1.054$\\
       & ($5.76$) & ($6.01$) & ($4.93$)\\
   Square of age & $0.999$ & $0.999$ & $0.999$\\
       & ($7.33$) & ($7.73$) & ($6.39$)\\     
   Comorbidity status & $2.639$ & $2.437$ & $2.090$\\
       & ($10.48$) & ($10.45$) & ($6.65$)\\    
   Dementia and Mental Health & $0.510$ & $0.483$ & $0.420$\\
       & ($11.17$) & ($11.65$) & ($10.63$)\\   
   Learning disability & $1.699$ & $1.484$ & $2.091$\\
       & ($6.88$) & ($4.63$) & ($6.65$)\\   
   Physical disability & $0.964$ & $0.933$ & $0.970$\\
       & ($0.78$) & ($1.37$) & ($0.45$)\\   
   Telecare use & $0.568$ & $0.620$ & $0.488$\\
       & ($4.27$) & ($3.18$) & ($3.87$)\\    
   Sex & $0.887$ & $0.915$ & $0.863$\\
       & ($3.64$) & ($2.53$) & ($3.09$)\\    
   Area of residence & $1.245$ & $1.244$ & $1.375$\\
       & ($4.31$) & ($4.04$) & ($4.29$)\\    
   Time trend 1 & $1.035$ & $1.036$ & $1.038$\\
       & ($8.09$) & ($8.03$) & ($7.26$)\\    
   Telecare residuals & $0.013$ & $0.011$ & $0.04$\\
       & ($10.45$) & ($10.22$) & ($10.22$)\\    
   Selection equation residuals & $0.003$ & $0.002$ & $0.001$\\
       & ($11.13$) & ($11.16$) & ($11.15$)\\    
   Telecare interacted with residuals & $38.362$ & $19.560$ & $88.104$\\
       & ($3.76$) & ($2.71$) & ($3.41$)
\\ \hline   
\textbf{Wald test for weak} & 824.41& & \\ 
\textbf{instruments}; $\chi^2(p-value)$ &(0.00) & & \\\hline
\textbf{Wald Chi-square test;} & $411.78$ & $391.38$ & $47.03$ \\
 ${\chi}^2$, (p-value) &($0.0000$) &($0.0000$)& ($0.0000$) \\\hline
\textbf{Likelihood ratio test} & $\alpha=1.142$ & $\alpha=1.512$ & $\log\alpha=22.910$ \\
    for $\alpha=0$ or $\log\alpha=-\infty(p-value)$ &($0.00$) &($0.00$) & ($0.00$) \\\hline  
\textbf{PSM technique}& $-0.439$ & & \\
 $\operatorname{ATE}(p-value)$ & (0.508) & & \\ \hline
\textbf{Number of observations} & $15,157$ & $21,511$ & $15,157$ \\ \hline
\textbf{Number of homecare clients} & $8,590$ & $10,586$ & $8,590$ \\ \hline

   \end{tabular}
   \begin{tablenotes}
      \item 
Notes: ATE=Average Treatment Effect.  The variable for inclusion into the sample was omitted due to collinearity.  Robust $Z$-statistics are in parentheses and the standard errors used to compute them have been clustered by Project ID.  $Z$-statistics $\geq1.96$ imply that the Incidence Rate Ratio is statistically significant at $5\%$ significance level.  The reference category for client group is `Frail elderly'. 
   
    \end{tablenotes}
\end{threeparttable}
\end{table*}

The exponentiated coefficients of our formulated Negative Binomial Models in Table \ref{Table 4} are presented as Incidence Rate Ratios (IRRs).  The IRRs are computed as $\exp(\beta)=d$, where $d$ is the IRR and $\beta$ is the estimated coefficient.  $d$ is interpreted as a change in the outcome measure by $(\exp(\beta)-1)\%$ for every unit change in a particular covariate (see, for example, \cite{r61}).  The second column of the table, labeled (1) contains the IRRs for Model \ref{Model 1}.  The third column, labeled (2) contains the IRRs for Model \ref{Model 2}.  The fourth column, labeled (3) contains the IRRs for Model \ref{Model 3}.  The average treatment effect derived using the \hyperref[PSM]{PSM technique} is indicated below the empirical results of Model \ref{Model 1}.

In order to validate the three formulated models, a preliminary analysis was conducted to assess the appropriateness of generating time series data for our empirical illustration by investigating parameter stability over time.  In particular, the analysis entailed interacting the covariates in the econometric models of primary interest with the variable for time trend, including the interaction terms as additional regressors in the model, and then inspecting the coefficients of the interaction terms (see, for example, \cite{r49}).  The results of this exercise showed that the IRRs of the interaction terms are either not statistically significant at $5\%$ level of significance or are approximately equal to one, which implies that the effects of our study covariates have remained more or less the same over the $2010/2011$ financial year.  See the supplementary material for a more comprehensive discussion of the assessment and a full specification of the model.

A joint significance test was also conducted on the estimated coefficients of the two instrumental variables in the reduced form model of telecare use and the sample selection model to test for weak instruments.  The results of this test (presented in Table \ref{Table 4}) show that the instrumental variables are not weak since their estimated coefficients are significantly different from zero ($\chi^{2}=824.4;\operatorname{p-value}=0.0000$).  We also conducted a likelihood ratio test on the dispersion parameter, $\alpha$, to determine whether or not the Negative Binomial Model is the appropriate one to use in our case (\cite{r38}).  The results of this test (presented in Table \ref{Table 4}) confirm the appropriateness of using the Negative Binomial Model since the dispersion parameter is greater than one and statistically significant across all the formulated models. 
\begin{table*}[!ht]
\begin{threeparttable}
\caption{Incidence Rate Ratios for five variants of Model 1}
\label{Table 5}
\begin{tabular}{lccccc}
\hline
\\
Variable & \multicolumn{1}{c}{$(1)$} & \multicolumn{1}{c}{$(2)$} & \multicolumn{1}{c}{$(3)$} & \multicolumn{1}{c}{$(4)$} &
\multicolumn{1}{c}{$(5)$} \\
\hline
Age & 1.001 & 1.001 & 1.003 & 1.042 & 1.042 \\
    & (0.08) & (0.09) & (0.29) & (5.87) & (5.76)\\
Square of age & 0.999 & 0.999 & 0.999 & 0.999 & 0.999 \\    
    & (0.32) & (0.40) & (0.64) & (7.38) & (7.33)\\
Comorbidity status &    1.010 & 1.002 & 1.003 & 2.330 & 2.639 \\
    & (0.27) & (0.07) & (0.09) & (10.25) & (10.48)\\
Dementia and Mental Health & 0.890 & 0.894 & 0.890 & 0.516 & 0.510 \\   
    & (2.16) & (2.03) & (2.11) & (10.96) & (11.17)\\ 
Learning disability & 1.684 & 1.648 & 1.676 & 1.734 & 1.699 \\  
    & (5.49) & (5.19) & (5.33) & (7.00) & (6.88)\\ 
Physical disability & 1.061 & 1.062 & 1.059 & 0.962 & 0.964 \\ 
    & (1.09) & (1.10) & (1.04) & (0.82) & (0.78)\\
Telecare use & 0.999 & 1.008 & 1.022 & 0.937 & 0.568 \\        
    & (0.01) & (0.09) & (0.24) & (1.19) & (4.27)\\
Sex & 1.013 & 1.020 & 1.015 & 0.889 & 0.877 \\        
    & (0.34) & (0.52) & (0.37) & (3.55) & (3.64)\\  
Area of residence & 1.119 & 1.127 & 1.119 & 1.244 & 1.245 \\ 
    & (2.22) & (2.33) & (2.19) & (4.17) & (4.31)\\         
Time trend 1 & & 0.991 & 0.991 & 1.034 & 1.035 \\ 
   &  & (5.87) & (5.89) & (7.81) & (8.09)\\    
Telecare residuals & & & 0.644 & 0.018 & 0.013 \\ 
   &  &  & (1.38) & (9.80) & (10.45)\\
Selection equation residuals & & & & 0.003 & 0.003 \\ 
   &  &  &  & (10.84) & (11.13)\\  
Telecare interacted with residuals & & & & & 38.362 \\
   &  &  &  &  & (3.76)             
\\ \hline
Number of observations & 15,158 & 15,158 & 15,157 & 15,157 & 15,157 \\ 
\hline
Number of homecare clients & 8,591 & 8,591 & 8,590 & 8,590 & 8,590 
\\ \hline
\end{tabular}    
\begin{tablenotes}
      \item 
Notes: The variable for inclusion into the sample was omitted due to collinearity.  Robust $Z$-statistics are in parentheses and the standard errors used to compute them have been clustered by Project ID.  $Z$-statistics greater than or equal to $1.96$ imply that the Incidence Rate Ratio is statistically significant at $5\%$ significance level.  The reference category for client group is `Frail elderly'.     
   
    \end{tablenotes}

\end{threeparttable}
\end{table*}
A further joint significance test on the estimated coefficients of our three Negative Binomial Models of interest was conducted using the Wald Chi-square test.  The results of this test (presented in Table \ref{Table 4}) show that the explanatory variables included in the formulated models help in explaining the observed variation in the length of stay in hospital

The empirical results of the \hyperref[PSM]{PSM technique} presented in Table \ref{Table 4} show that the use of telecare reduces the expected length of stay in hospital but it is not a significant predictor.  In particular, the average treatment effect is $-0.439$ with a $\operatorname{p-value}$ of $0.508$.  Looking at the other empirical results presented in the table, we can observe that the IRRs are robust across the three models in the sense that the effects of the covariates remain unchanged.  In particular, we note that older individuals, males, individuals residing in rural areas, individuals with learning disabilities and those with three or more comorbid conditions have a comparatively long expected length of stay in hospital, holding other factors constant, whereas individuals with dementia or other mental illnesses and telecare users have a relatively shorter expected length of stay in hospital than their counterparts, all else equal.  The results also show that it is important to control for the endogeneity of telecare use, unobserved heterogeneity and selectivity bias in tandem as the coefficients of all the control function variables are statistically significant at $5\%$ level of significance.

The results in Table \ref{Table 5} confirm the importance of controlling for all the estimation issues simultaneously.  According to the table, we note that the variable for telecare use is significant only in the variant of Model \ref{Model 1}, labeled (5), that controls for unobserved heterogeneity.  Furthermore, the results suggest that if we were to use the other estimation strategies besides the one that is used in the column labeled (5), then the IRR of telecare would be approximately equal to one; implying that telecare use would have been found to have very little effect on the expected length of stay in hospital.
\section{Conclusions}\label{Section 5}
This paper presents three econometric models that can be used to estimate the effect of telecare use on the length of stay in hospital.  A theory of the demand for telecare is first developed to explain a particular individual's relative preference for telecare in light of the prices faced by the individual and the individual's wealth level.  Since our analytical problem is such that there are some goods and services which are obtained from the market place and others for which no market exists, we make several necessary assumptions that allow us to properly identify individuals' demand behavior.  We conceptualize telecare as a health input in a particular individual's health production function that also enables the individual to maximize utility.  Accordingly, we have that telecare users choose to use telecare instead of the other competing services in order to achieve welfare gains via health production.

Because the true health status of a particular individual is unknown to us largely due to the individual's unobservable biological component, the health production function is defined as a function of both observable health inputs such as telecare as well as the individual's uonbservable biological component.  We prove that the demand for telecare can be decomposed into six components that explain the relationship between the quantities of the utility generating inputs consumed and their price levels in different scenarios.  We show that for a particular individual who uses telecare, the consumption of telecare increases when the economic conditions improve due to the individual's relative preference for telecare.  Moreover, the theory developed in this paper posits that the individuals who choose to use telecare would demand more telecare if the resources used to acquire the other utility generating inputs were to be reallocated to acquiring telecare.  Since telecare is one of the individual's observable health inputs, the increased consumption of telecare would imply that the individual has become more efficient at producing health, ceteris paribus.  The demand function for telecare also acknowledges that we are not able to observe all the factors that determine a particular individual's choice to use telecare due to our inability to observe the individual's unobservable endowment.

We also prove that the consumption of telecare for a particular individual not only depends on its own price and the individual's level of wealth, but also on the prices of the other utility generating inputs.  Given the specification of the indirect utility function in Theorem \ref{Theorem 1}, perhaps the most important result is Corollary \ref{Corollary 1}.  Here, we show that because part of a particular individual's utility function is unobservable, we cannot sufficiently explain the individual's utility maximizing behavior even with perfect data.  The implication of this result is that the study design and estimation strategy that one uses to estimate the impact of telecare matters since the unobservable component of the indirect utility function is handled differently.  In particular, if one uses an experimental study design to elicit the causal effect, then the design would not be able to randomize the unobservable component.  In the event that there are systematic variations in each study arm, then the estimated treatment effect would further be compromised.  We show in Proposition \ref{Proposition 1} that the expected differences between telecare users and non-users render experimental study designs insufficient as some of these differences are typically unobservable.  We, therefore, conjecture that this could be the reason why the randomized controlled trials in the literature that look into the effectiveness of telecare do not find significantly different health and service outcomes between telecare users and non-users (see, for example, \cite{r29,r27,r54}).

In this paper, we employ an estimation strategy that allows us to control for the unobservable component in the health production function in a way that an experimental study design cannot.  We use a flexible control function approach that controls for unobservable systematic variations within the population of telecare users in addition to incidental truncation using several econometric techniques.  This approach has been used by several other studies in the literature to address potential endogeneity, potential unobserved heterogeneity and potential selectivity bias in their models (see, for example, \cite{r56,r42,r47,r44,r11}).  We however extend the literature, particularly those studies that estimate the health production function whether directly or indirectly, by formulating a model that considers an individual's health production to be a dynamic process.  Since we expect individuals' stock of health capital to decrease over time, all else equal, we prove in Proposition \ref{Proposition 2} that the estimation strategies that do not employ time series analyses make a theoretically invalid assumption by assuming that individuals' utilities and, by extension, their stock of health capital remains fixed over time.

We also show in Proposition \ref{Proposition 2} that the strategies that entail substituting the treatment variable with an estimate of its predicted values are not sufficient in estimating the treatment effect as they do not control for all the relevant unobservable factors.  \cite{r57} also made a similar observation as they noted that the 2SRI approach is consistent in non-linear models, whereas the 2SPS technique is not.  We, therefore, conjecture that using the fixed effects regression model to account for bias in panel data may also not be ideal as it accounts only for the influence of individual level time-invariant unobservable factors.  Since the overall objective in this paper is to relate the demand for telecare to the length of stay in hospital, we make various assumptions about the probability distribution of the outcome measure and formulate three variants of the Negative Binomial Model i.e. a basic Negative Binomial Model, a zero--inflated Negative Binomial Model and a zero--truncated Negative Binomial Model.  We then use linked administrative health and social care data to estimate their treatment effects and compare the treatment effects with that of the \hyperref[PSM]{PSM technique}, which adopts a quasi-experimental study design.

A potential limitation of the study data, however, is that we are not able to observe all the covariate levels for those who were not admitted to hospital due to missingness and, as such, the effective estimation samples do not include their observations.  Accordingly, the basic Negative Binomial Model does not consider zero counts, whereas the zero--inflated component of the zero--inflated Negative Binomial Model does not contain the outcomes of the individuals who were never hospitalized during the analytic period.  Future research could, therefore, employ better data.  Nevertheless, we have no reason to believe that our estimated results would be different with better data as the missing data neither precludes the estimation of consistent treatment effects nor invalidates the results of the basic Negative Binomial Model and the zero--inflated model variant.  The empirical results of all the three Negative Binomial Models and the \hyperref[PSM]{PSM technique} show that telecare use is expected to reduce the length of stay in hospital, holding other factors constant, but while the treatment effects in the three Negative Binomial Models are statistically significant at $5\%$ level of significance, the treatment effect in the \hyperref[PSM]{PSM technique} is not.

A sensitivity analysis based on the basic Negative Binomial Model shows that we observe a significant treatment effect only after accounting for unobserved heterogeneity, underscoring the need to control for the potential systematic variations in the unobservable component of individuals' health production functions.  Because our econometric models of interest are multivariable, the empirical results also show that males, rural residents, individuals with learning disabilities and those with comorbidities have, on average, a longer length of stay in hospital than their counterparts.

The present work builds on the studies in the health economics literature that investigate patients' length of stay in hospital (see, for example, \cite{r20,r40,r21,r45,r22,r41}) and could be extended by modeling several variables in addition to telecare that are endogenously determined with the length of stay in hospital.  Future work could also test the theoretical model and estimation strategy developed in this paper on a non--parametric estimator that does not assume any particular distribution for the data generating process such as the one employed by \cite{r41} as well as model other outcome measures besides the length of stay in hospital.
\section*{Acknowledgements}
I am grateful to the Economic and Social Research Council (Great Britain) for funding my Ph.D. studentship and to the Health Economics Research Unit at the University of Aberdeen for their support during my studies.  I would also like to acknowledge NHS National Services Scotland for providing the linked administrative health and social care data for this study, and eDRIS for providing technical support while conducting the data analysis in the National Safe Haven. 

\appendix
\begin{normalsize}
\section{Appendix A: Proofs for main results}\label{Appendix A}
\subsection{Proof of Lemma 1}\label{lemma 1 proof}
\begin{proof}
Because we can write $T$ as $T=(.)1/p_{T}+C_{1}+\mu$ and $T=(.)p_{\mathcal{B}Q}+C_{2}+\mu$ such that (i) $C_{1},C_{2}\in\mathbb{R}$; (ii) $\mu,p_{T}\not\in C_{1}$; (iii) $\mu,p_{\mathcal{B}Q}\not\in C_{2}$; (iv) $C_{1},C_{2}=[-\infty,\infty]$, and (v) $\mu\in T,H=H(.)$ is unobservable; there exists a function $f_{1}\in\mathcal{F}:p_{T}\mapsto T$ for all $T\in\mathcal{S}$ and $p_{T}\in p$, and also $\exists f_{2}\in\mathcal{F}:p_{\mathcal{B}Q}\mapsto T\forall T\in\mathcal{S}$ and $p_{\mathcal{B}Q}\in p$ so that $T$ may be expressed as $T=t(p_{T})$ and $T=t(p_{\mathcal{B}Q})$.  Similarly, since we also have $T=(.)w+C_{3}+\mu$ such that $\mu,w\not\in C_{3}$ and $C_{3}\in\mathbb{R}=[-\infty,\infty]$; there exists $f_{3}\in\mathcal{F}:w\mapsto T$ for all $T\in\mathcal{S}$ and $w\in\mathcal{W}$ so that $T$ is represented as $T=t(w)$.  And because we also have $T=(.)\mu+C_{4}$ such that $C_{4}\in\mathbb{R}$ and $\mu\not\in C_{4}$, $\exists f_{4}\in\mathcal{F}:\mu\mapsto T\forall T\in\mathcal{S}$ and $\mu\in H=H(.)$ so that $T$ may also be written as $T=t(\mu)$.  Accordingly, we have that $T=t(p_{T},p_{\mathcal{B}Q},w,\mu)$.  To see this for $w,p_{T},p_{\mathcal{B}Q}\in\{-\infty,\infty\}$, let $\lim\limits_{w,p\to\infty}T=\tau_{1}$ and $\lim\limits_{w,p\to -\infty}T=\tau_{2}$.  We expect that $p=\{p_{T},p_{\mathcal{B}Q}\},w,\mu\in\tau_{1}$ and $\tau_{2}$ is such that $T\propto 1/p_{T}$, $T\propto p_{N}$, $T\propto p_{Q}$, $T\propto p_{P}$, $T\propto w$ and $T\propto\mu$.  By Assumption \ref{Assumption 2}, we know that $\forall S^{'}(p,w)\in\mathcal{S}$, if $T\succcurlyeq Q$, then $T$ is substituted for $Q$.  Since we have that $T=(.)1/p_{T}+C_{1}+\mu$ and $T=(.)p_{\mathcal{B}Q}+C_{2}+\mu$, it follows then that $f_{1}:-\Delta p_{T}\Rightarrow +\Delta T$ and $f_{2}:+\Delta p_{\mathcal{B}Q}\Rightarrow+\Delta T$.  But because we also have that $w=\overline{w}$ in Assumption \ref{Assumption 3} for both users of $T$ and $Q$, $p_{T}=p_{Q}$ all else equal.  Consequently, we have that $-\Delta p_{T}\Rightarrow+\Delta w\Rightarrow+\Delta T$ if $T\succcurlyeq Q$ and $-\Delta p_{Q}\Rightarrow+\Delta w\Rightarrow+\Delta Q$ if $Q\succcurlyeq T$.  $T$ and $Q$ can, therefore, be said to be normal since $p_{T}=p_{Q}\Rightarrow T,Q\in\mathcal{K}\propto 1/(p_{T},p_{Q})$ and, as such, $+\Delta w\Rightarrow+\Delta T,Q\in\mathcal{K}$.
\end{proof}
\subsection{Proof of Corollary 1}\label{Corollary 1 proof}
\begin{proof}
We use an analytical approach to provide the proof.  Consider a family of datasets $G\in\mathbb{R}_{m\times n}$ such that $\forall G^{'}\in\mathcal{G}\exists x\prime s\in G^{'}$ so that $G_{i}\in G=X_{m\times i}$ with $\operatorname{rank}(G_{i})=i$ for $i=1,\ldots,n$ and $i\in\mathbb{Z}^{+}$.  By Assumption \ref{Assumption 1}, let $X\in G^{'}$ represent $P,N,T,Q\in\mathcal{S}$.  Consider also a hypothetical outcome variable $Y\in\mathbb{N}^{*}\cup\{0\}$ such that $Y=\alpha+\mathrm{b}X+\epsilon$ is the empirical specification of $V=v(.)$ and $G_{n}=X_{m\times n}$ explains $Y$ so well that $Y\sim\hat{Y}$.  Let also $X\in G^{'}, Y\sim\mathcal{N}(0,1)$ and $Y\perp X\in G^{'}$ for expositional simplicity.  We first use $X_{m\times 1}$ with $\rho(X_{m\times 1})=1$ to estimate $Y$ such that $Y=\alpha_{1}+X_{m\times1}.\mathrm{b}_{1}+\epsilon_{{\hat{Y}_{1}}}$ and $\mathrm{b}_{1}=(X_{m\times1}^{\mathrm{T}}X_{m\times1})^{-1}X_{m\times1}^{\mathrm{T}}Y$.  Since $G_{1}\subset G_{n}$, we expect that $Y-\hat{Y}_{1}=\epsilon_{{\hat{Y}_{1}}}\neq0$.  We then pick $X_{m\times2}=G_{2}$ with $\operatorname{rk}(X_{m\times2})=2$ and estimate $Y$ such that $Y=\alpha_{2}+X_{m\times2}.\mathrm{b}_{2}+\epsilon_{{\hat{Y}_{2}}}$ and $\mathrm{b}_{2}=(X_{m\times2}^{\mathrm{T}}X_{m\times2})^{-1}X_{m\times2}^{\mathrm{T}}Y$.  Given that $\vert X_{m\times2}\vert\textgreater\vert X_{m\times1}\vert$ and $G_{1},G_{2}\subset G_{n}$, we have that $Y-\hat{Y}_{2}=\epsilon_{{\hat{Y}_{2}}}<\epsilon_{{\hat{Y}_{1}}}$.  We then pick $X_{m\times3}$ with full rank and estimate $Y$ such that $Y=\alpha_{3}+X_{m\times3}.\mathrm{b}_{3}+\epsilon_{{\hat{Y}_{3}}}$ and $\mathrm{b}_{3}=(X_{m\times3}^{\mathrm{T}}X_{m\times3})^{-1}X_{m\times3}^{\mathrm{T}}Y$.  Because $\vert X_{m\times3}\vert\textgreater\vert X_{m\times2}\vert\textgreater\vert X_{m\times1} \vert$ and $G_{1},G_{2},G_{3}\subset G_{n}$, it follows then that $Y-\hat{Y}_{3}=\epsilon_{{\hat{Y}_{3}}}<\epsilon_{{\hat{Y}_{2}}}<\epsilon_{{\hat{Y}_{1}}}$.  We continue picking $G^{'}\in\mathcal{G}$ such that $\vert X_{m\times k-1}\vert<\vert X_{m\times k}\vert$ and computing $\epsilon_{{\hat{Y}}}$ for each $G_{i}\in G$ up to $G_{n-1}$.  Since $\vert X_{m\times n-1}\vert<\vert X_{m\times n}\vert$, we have that $\epsilon_{{\hat{Y}_{n-1}}}$ is written as

\begin{equation*}
\begin{pmatrix}
\epsilon_{1} \\
\epsilon_{2}\\
\vdots \\
\epsilon_{m}
\end{pmatrix}=
\begin{pmatrix}
Y_{1} \\
Y_{2}\\
\vdots \\
Y_{m}
\end{pmatrix}-
\begin{pmatrix}
1 & X_{11} & X_{12} & \cdots & X_{1n-1}\\
1 & X_{21} & X_{22} & \cdots & X_{2n-1}\\
\vdots & \vdots & \vdots & \ddots & \vdots\\
1 & X_{m1} & X_{m2} & \cdots & X_{mn-1}
\end{pmatrix}
\begin{pmatrix}
\mathrm{b}_{0} \\
\mathrm{b}_{1} \\
\vdots \\
\mathrm{b}_{n-1}
\end{pmatrix}
\end{equation*}
and by induction $\epsilon_{{\hat{Y}_{k-1}}}\gg \epsilon_{{\hat{Y}_{k}}}$.  Since we also have that $\operatorname{corr}(T,\mu)\neq0$ in Theorem \ref{Theorem 1}, we know that if we were to estimate $\epsilon_{{\hat{Y}_{n}}}$ using $G_{n}=X_{m\times n}$, then $Y-\hat{Y}_{n}$ must contain $\mu$ and we would thus not have $Y\sim\hat{Y}_{n}$.  Accordingly, $\Vert E(Y\vert X)-E(\hat{Y}\vert X) \|$ must also have $\mu$ in addition to $\epsilon\forall G_{i}\in G$ implying that $\epsilon\perp\mu$.  We can, therefore, write $\Vert E(Y\vert X)-E(\hat{Y}\vert X) \|=\delta=\{\epsilon,\mu\}$.
\end{proof}
\subsection{Proof of Proposition 1}\label{Proposition 1 proof}
\begin{proof}
We provide the proof by contradiction.  Conversely, let $Y\perp\!\!\!\perp T\vert X\forall i$ and $t$, and $E(Y_{T=0}\vert T=1)-E(Y_{T=0}\vert T=0)=0$.  Let also $T_{T=1},T_{T=0}\subset T$ be such that $T_{T=1}\Rightarrow T=1$ and $T_{T=0}\Rightarrow T=0$.  Suppose that $\exists G_{T=1}\in\mathbb{R}_{m\times n}$ such that $\hat{Y}_{T=1}$ estimated by $G_{T=1}=\{X_{m\times n}\}_{T=1}$ with $\operatorname{rank}(n)$ is the best predictor of $Y_{T=1}$.  Correspondingly, suppose also that $\exists G_{T=0}\in\mathbb{R}_{m\times n}$ such that dataset $G_{T=0}=\{X_{m\times n}\}_{T=0}$ with $\rho(n)$ estimates $Y_{T=0}$ with a great deal of precision so that $Y_{T=0}\sim\hat{Y}_{T=0}$.  Then by the definition of $V=v(.)$ and $T=t(.)$, it must be the case that $\mu_{T=1}\in T_{T=1}t(.)$ is equal to $\mu_{T=0}\in T_{T=0}t(.)$ if the difference between $E(Y_{T=0}\vert T=1)$ and $E(Y_{T=0}\vert T=0)$ is to be equal to zero.  Since $\mu_{T=1},\mu_{T=0}\in H=H(.)$ is unobservable, however, it is not reasonable to expect that $\mu_{T=1}=\mu_{T=0}$.  Furthermore, by the definition of $V=v(.)\Rightarrow Y=f(X,T)$ where $Y$ is our outcome measure of interest we know that $\operatorname{corr}(Y,\mu)\neq0$ and thus the claim that $Y\perp\!\!\!\perp T\vert X\forall i$ and $t$ cannot be true.  Moreover, by Assumption \ref{Assumption 1} we know that $\exists P,N\in T=t(.)$ since $p_{S}\in S^{'}$ and because by Assumption \ref{Assumption 2} we also have that $T\succcurlyeq Q\Rightarrow T$ is chosen instead of $Q$, it is highly plausible that $\exists m\in T=t(.)$ such that $m\not\in P,N$ and $\operatorname{corr}(m,Y)\neq0$.  Accordingly, we must have that $Y$ is not independent of $T$ given $X$ i.e. $Y\not\!\perp\!\!\!\perp T\vert X\forall i$ and $t$.
\end{proof}
\subsection{Proof of Proposition 2}\label{Proposition 2 proof}
\begin{proof}
Consider a population $\mathrm{N}$ with $i$ individuals $:\mathrm{N}\in\mathbb{N}=\{1,2,\ldots,i\}$.  Since we have that $T\in\mathbb{Z}_{2}:\mathbb{Z}_{2}=\{0,1\}$ and $Y=\{0,1,2,\ldots,n\}\in\mathbb{N}^{0}$ for all $i$ individuals in population $\mathrm{N}$, let $T_{T=1}$ represent $T=\{1\}\forall i\in \mathrm{N}$ and $T_{T=0}$ represent $T=\{0\}\forall i\in \mathrm{N}$.  Correspondingly, let also $Y_{T=1}$ represent $Y:T=\{1\}$ and $Y_{T=0}$ represent $Y:T=\{0\}\forall i\in \mathrm{N}$.  It follows then that $Y_{T=1},Y_{T=0},T_{T=1}$ and $T_{T=0}$ are such that $Y_{T=1},Y_{T=0}\in Y\forall i$ and $t$ in population $\mathrm{N}$ and $T_{T=1},T_{T=0}\subset T\forall i$ and $t$ in population $\mathrm{N}$.  Since $G_{TX}=\{X_{m\times n},T_{m\times 1}\}$ with $\rho(n)$ explains $Y$ almost perfectly, we must have that $\vert Y\vert\sim\vert\hat{Y}\vert$ which implies that $E(Y_{T=0}\vert T=1)\sim E(Y_{T=0}\vert T=0)$ in order for $\mathrm{s}_{1}=\{Y\vert X,T\}$ and $\mathrm{s}_{2}=\{Y\vert X,\hat{T},t\}$ to be sufficient.  But by Theorem \ref{Theorem 1}, we have that $\mu\in V=v(.)$ and $\mu\in T=t(.)$.  And by Corollary \ref{Corollary 1}, we also have that $\epsilon\perp\mu\Rightarrow\mu\in V=v(.)\Rightarrow\mu\in Y$ even though $\epsilon\not\in\hat{T}$.  Accordingly, $Y\not\!\perp\!\!\!\perp T\vert X$ and $Y\not\!\perp\!\!\!\perp\hat{T}\vert X$.  We, therefore, cannot say that $\mathrm{s}_{1}=\{Y\vert X,T\}$ and $\mathrm{s}_{2}=\{Y\vert X,\hat{T},t\}$ are sufficient.  Moreover, by the definition of $\mathrm{s}_{1}$, we must have that $Y_{t-1}\equiv Y_{t}\Rightarrow V_{t-1}\equiv V_{t}$ since it is applicable $\forall i$ and $t$ in population $\mathrm{N}$.  Because by the definition of $T=t(.)$ we have that $\mu\in T=t(.)$ for all $i\in\mathrm{N}$, it is reasonable to expect that $\exists m\in\mathrm{N}$ such that $T_{m\in\mathrm{N}}=t(.)\neq T_{m\not\in\mathrm{N}}=t(.)$ and consequently that $[Y_{m\in\mathrm{N}}=f(X,T,t)]\neq [Y_{m\not\in\mathrm{N}}=f(X,T,t)]$.  Furthermore, from $I=g(X,Z,t)$ we have that $Y=f(X,T,t)\iff y\in Y:f(y)\neq0$.  Accordingly, $\mathrm{s}_{3}$ can be sufficient if and only if $E(T,\xi)\approx 0$ and $E(Y\vert I=1)\sim E(Y\vert I=0)\Rightarrow E(I,\xi_{s})\approx 0$, all else equal.  Strategy $\mathrm{s}_{4}$ can be sufficient iff $E(Y\vert I=1)-E(Y\vert I=0)\approx 0\Rightarrow E(I,\xi_{s})\approx 0$, ceteris paribus.
\end{proof}
\section*{Glossary}\label{Glossary}
\noindent \textbf{Control Function Approach}.   A statistical technique for estimating regression coefficients that involves including additional explanatory variables in the regression model to control for bias. \\
\textbf{Experimental study design}.   A study design that computes the treatment effect of a particular intervention as the difference in the expected outcomes between the individuals who used the intervention i.e. the treatment group and another group of individuals who did not use the intervention i.e. the control group, all else equal.  Experimental study designs may either be randomized or non-randomized.  Randomized experimental designs, also known as randomized controlled trials, entail random assignment of individuals into either the treatment group or control group, whereas non-randomized experimental designs, also known as quasi-experiments, lack random assignment.\\
\textbf{Health Production}.   The process by which the health status and/or well-being of an individual is improved via the use of health enhancing goods and services.\\
\textbf{Indifference curve}.   A graph showing the combinations of inputs chosen by a particular individual that yield the same level of utility.\\
\textbf{Propensity Score Matching (PSM) technique}.   An experimental study design that estimates the treatment effect of an intervention by matching the users of the intervention with those who did not use the intervention but of similar characteristics.\\
\textbf{Rank of a matrix}.  The total number of columns or rows in a matrix that are linearly independent.  It is usually denoted by $\operatorname{rk}(B)$, $\rho(B)$ or $\operatorname{rank}(B)$, where $B$ is the matrix of interest.  Matrix $B$ is said to have full rank if the rank number equals the maximum number of linearly independent columns or rows.\\
\textbf{Unobservable factors}.  Unobserved entities that may have a substantive effect on outcomes but cannot be measured.
\end{normalsize}
\begin{normalsize}

\end{normalsize}
\end{document}